\tikzset{
  shift left/.style ={commutative diagrams/shift left={#1}},
  shift right/.style={commutative diagrams/shift right={#1}}
}
\newtheorem{thm}{Theorem}[section]
\newtheorem{asump}{Assumption}
\newtheorem{rk}{Remark}
\newcommand{\tp}[1]{{#1}^{\mathsf T}}
\renewcommand{\bar}{\overline}
\newcommand{\eps}{\epsilon}
\newcommand{\pa}{\partial}
\renewcommand{\eps}{\varepsilon}
\renewcommand{\epsilon}{\varepsilon}
\renewcommand{\Sigma}{\varSigma}
\newcommand{\E}{\mathrm E}
\newcommand{\V}{\mathrm{Var}}
\newcommand{\tr}{\mathrm{tr}}
\newcommand{\VEC}{\mathrm{vec}}
\DeclareMathAlphabet\mathbfcal{OMS}{cmsy}{b}{n}
\newcommand{\half}{\frac12}
\newcommand{\bv}{{\bf v}}
\newcommand{\bV}{{\bf V}}
\newcommand{\bw}{{\bf w}}
\newcommand{\bW}{{\bf W}}
\newcommand{\bzero}{{\bf 0}}
\newcommand{\bY}{{\bf Y}}
\newcommand{\bx}{{\bf x}}
\newcommand{\bX}{{\bf X}}
\newcommand{\bt}{{\bf t}}
\newcommand{\bM}{{\bf M}}
\newcommand{\bC}{{\bf C}}
\newcommand{\bI}{{\bf I}}
\newcommand{\bL}{{\bf L}}
\newcommand{\GP}{\mathcal{GP}}
\newcommand{\mC}{{\mathcal C}}
\newcommand{\mG}{{\mathcal G}}
\newcommand{\mK}{\mathcal{K}}
\newcommand{\mH}{\mathcal{H}}
\newcommand{\mI}{{\mathcal I}}
\newcommand{\mN}{{\mathcal N}}
\newcommand{\mMN}{\mathcal{MN}}
\newcommand{\mO}{{\mathcal O}}
\newcommand{\mbX}{{\mathbb X}}
\newcommand{\mbY}{{\mathbb Y}}
\newcommand{\mbR}{{\mathbb R}}
\newcommand{\bU}{{\bf U}}
\title{Bayesian Spatiotemporal Modeling for Inverse Problems}
\author{Shiwei Lan\thanks{slan@asu.edu} \qquad
Shuyi Li \qquad Mirjeta Pasha}
\affil{School of Mathematical \& Statistical Sciences, Arizona State University, \\Tempe, AZ 85287}
\date{}
\begin{document}

\maketitle

\begin{abstract}
Inverse problems with spatiotemporal observations are ubiquitous in scientific studies and engineering applications. In these spatiotemporal inverse problems, observed multivariate time series are used to infer parameters of physical or biological interests. Traditional solutions for these problems often ignore the spatial or temporal correlations in the data (static model), or simply model the data summarized over time (time-averaged model). In either case, the data information that contains the spatiotemporal interactions is not fully utilized for parameter learning, which leads to insufficient modeling in these problems.
In this paper, we apply Bayesian models based on spatiotemporal Gaussian processes (STGP) to the inverse problems with spatiotemporal data and show that the spatial and temporal information provides more effective parameter estimation and uncertainty quantification (UQ). We demonstrate the merit of Bayesian spatiotemporal modeling for inverse problems compared with traditional static and time-averaged approaches using a time-dependent advection-diffusion partial different equation (PDE) and three chaotic ordinary differential equations (ODE). We also provide theoretic justification for the superiority of spatiotemporal modeling to fit the trajectories even it appears cumbersome (e.g. for chaotic dynamics). 
\end{abstract}

\noindent{\bf Keywords:} Spatiotemporal Inverse Problems, Spatiotemporal Gaussian Process, Chaotic Dynamics, Trajectory Fitting, Uncertainty Quantification

\section{Introduction}

Many inverse problems in science and engineering involve large scale spatiotemporal data, typically recorded as multivariate time series. There are examples in fluid dynamics that describes the flow of liquid (e.g. petroleum) or gas (e.g. flame jet) \cite{Baukal_2000}. Other examples include dynamical systems with chaotic behavior prevalent in weather prediction \cite{Lorenz_1963}, biology \cite{Liz_2012}, economics \cite{Brooks_1998} etc. where small perturbation of the initial condition could lead to large deviation from what is observed/calculated in time.
The goal of such inverse problems is to recover the parameters from given observations and knowledge of the underlying physics.
The spatiotemporal information is crucial and should be respected when considering proper statistical models for parameter learning. This is not only of interest in statistics, but also beneficial for practical applications of physics and biology to obtain inverse solutions and UQ more effectively.

Traditional methods for these spatiotemporal inverse problems often ignore the time dependence in the data for a simplified solution \cite{villa2020,cleary2020,lan2022}. They either treat the observed time series statically as independent identically distributed (i.i.d.) observations across times \cite{villa2020,lan2022} (hence we refer to it as ``static" model), or summarize them by taking time average or higher order moments \cite{Morzfeld_2018,cleary2020,Huang_2022} (referred as ``time-averaged" approach). The former is prevalent in Bayesian inverse problems with time series observations \cite{lan2022}. The latter is especially common in parameter learning of chaotic dynamics, e.g. Lorenz systems \cite{Lorenz_1963,cleary2020}, due to their sensitivity to the initial conditions and the system parameters, which in turn causes a rough landscape of the objective function. In both scenarios, the spatiotemporal information is not fully integrated into the statistical modeling.

In this paper, we propose to apply Bayesian methods based on GP to the inverse problems with sptiotemporal data to account for the space-time inter-dependence. This leads to fitting the whole trajectories of the observed data, rather than their statistical summaries, with elaborated models. More specifically, we use the STGP model \cite{cressie2011} to fit the observed multivariate time series in comparison with the static or the time-averaged (for summarized data) models. Theoretically, we justify why the STGP model should be preferred to by investigating their Fisher information, which can be used as a measurement of convexity: STGP renders a more convex likelihood than the other two models and leads to an easier learning of the parameters. We also demonstrate in numerical experiments (Section \ref{sec:numerics}) that the STGP model yields parameter estimates closer to the truth with smaller observation window required, and also provides more reasonable UQ results. Note this implies faster convergence (future work) by the STGP model, which is computationally important because complex ODE/PDE systems are usually expensive to solve.

Spatiotemporal reasoning/modeling was introduced to inverse problems. However, it was either qualitatively applied to specific domains such as functional magnetic resonance imaging (fMRI) \cite{Woolrich_2004}, electroencephalography (EEG) \cite{SIREGAR_1996} and electrocardiography (ECG) \cite{Shcherbakova_2021}, or to a simplified Gauss-linear problem \cite{Long_2011,Ojeda_2019,Conjard_2021,Yang_2017}. Spatiotemporal information was also used to construct prior \cite{Zhang_2005} and regularization \cite{Yao_2016,Pasha_2021}, or to reduce the number of parameters \cite{Echeverria_2009}. However, none of them formulates the spatiotemporal modeling in the general framework of Bayesian inverse problems with spatiotemporal observations.
We summarize the main contributions of this work as follows:
\begin{itemize}[itemsep=0pt]
    \item It formulates a Bayesian modeling framework for inverse problems with spatiotemporal data that includes traditional static and time-averaged methods;
    \item It provides a theoretical justification on why the STGP model is preferable in the spatiotemporal inverse problems;
    \item It numerically demonstrates the advantage of the STGP model in parameter learning and UQ.
\end{itemize}

The rest of the paper is organized as follows: Section \ref{sec:BUQ} reviews the background of Bayesian UQ for inverse problems, with a particular framework named \emph{Calibration-Emulation-Sampling (CES)} \cite{cleary2020,lan2022}. In Section \ref{sec:stip} we generalize the problem setup to include spatiotemporal observations and compare the STGP model (Section \ref{sec:stgp}) with the static model (Section \ref{sec:static}) and the time-averaged model (Section \ref{sec:time-average}). We prove in theorems \ref{thm:convexity1} and \ref{thm:convexity2} that the STGP model can have more convex likelihood than the static and the time-averaged models. Then in Section \ref{sec:numerics} we demonstrate the advantage of the STGP model over the other two traditional approaches with inverse problems involving an advection-diffusion equation and three chaotic dynamics. Finally we conclude with some discussions on future directions in Section \ref{sec:conclusion}.

\section{Background: Bayesian UQ for Inverse Problems}\label{sec:BUQ}

In many inverse problems, we are interested in finding an unknown parameter, $u$ (which could be a function or a vector), given the observed data, $y$. The parameter $u$ usually appears as a quantity of interest in the inverse problem,
e.g. the initial condition of a time-dependent advection-diffusion problem (Section \ref{sec:adif}) or the coefficient vector in the chaotic dynamics (Section \ref{sec:chaotic}).
Let $\mbX$ and $\mbY$ be two separable Hilbert spaces.
A forward mapping $\mG:\mbX \rightarrow \mbY$ from the parameter space $\mbX$ to the data space $\mbY$ (e.g. $\mbY=\mathbb R^m$ for $m\geq 1$) connects $u\in \mbX$ to $y\in \mbY$ as follows:
\begin{equation}\label{eq:forward}
y=\mG(u) + \eta, \qquad \eta\sim \mN(0,\Gamma)
\end{equation}
We can define the following potential function (negative log-likelihood), $\Phi:\mbX\times \mbY\to \mathbb R$, often with $\Gamma=\sigma^2 I$: 
\begin{equation}\label{eq:gauss_nz}
\Phi(u;y) = \frac{1}{2} \Vert y-\mG(u)\Vert^{2}_{\Gamma} = \frac{1}{2} \langle y-\mG(u), \Gamma^{-1} (y-\mG(u)) \rangle
\end{equation}
The forward mapping $\mG$ represents physical laws usually expressed as large and complex ODE/PDE systems that could be highly non-linear.
Therefore repeated evaluations of $\Phi(u;y)$ (and hence $\mG(u)$) are expensive for different $u$'s.

In the Bayesian setting, a prior measure $\mu_0$ is imposed on $u$, independent of $\eta$. 
For example, we could assume a Gaussian prior $\mu_0 = \mathcal N(0,\mathcal C)$ with the covariance $\mathcal C$ being a positive, self-adjoint and trace-class operator on $\mbX$.
Then we can obtain the posterior of $u$, denoted as $\mu(u|y)$, using Bayes' theorem \cite{stuart10,dashti2017}: 
\begin{equation}
\label{eq:Bayes}
\frac{d\mu}{d\mu_0}(u) = \frac{1}{Z}\,\exp(-\Phi(u;y)) \ , \quad \textrm{if} \ 0< Z:=\int_{\mbX} \exp(-\Phi(u;y)) \mu_0(du) < +\infty \ .
\end{equation}
Bayesian UQ for inverse problems involves learning the posterior distribution $\mu(du)$ which often exhibits strongly non-Gaussian behavior, posing significant challenges for efficient inference methods such as Markov Chain Monte Carlo (MCMC).

There are three urging computational challenges in the Bayesian UQ for inverse problems: 1) intensive computation for likelihood evaluations, which requires expensive solving of forward problems; 2) complex (non-Gaussian) posterior distributions; and 3) high dimensionality of the discretized parameter (still denoted as $u$ when there is no confusion from the context).
The latter makes the first two more difficult in the sense that high dimensionality not only makes the forward solutions more expensive, but also challenges the robustness of sampling algorithms.
To address these challenges, an approximate inference framework named Calibration-Emulation-Sampling (CES) has recently been proposed by \cite{cleary2020} and developed by \cite{lan2022}. 
It consists of the following three stages:
\begin{enumerate}[itemsep=0pt]
\item {\bf Calibration}: using optimization-based (ensemble Kalman) algorithms to obtain parameter estimation and collect expensive forward evaluations for the emulation step;
\item {\bf Emulation}: recycling forward evaluations from the calibration stage to build an emulator for sampling;
\item {\bf Sampling}: sampling the posterior approximately based on the emulator, which is much cheaper than the original forward mapping.
\end{enumerate}

CES calibrates the model with ensemble Kalman (EnK) methods \cite{evensen1994,Evensen_1996}. Two algorithms, ensemble Kalman inversion (EKI) \cite{Schillings_2017a,Garbuno-Inigo_2020} and ensemble Kalman sampler (EKS) \cite{Garbuno-Inigo_2020,Garbuno_Inigo_2020}, evolve $J$ ensemble particles $\{u^{(j)}\}_{j=1}^J$ according to the following equations respectively:
\begin{subequations}
\label{eq:enk}
\begin{align}
\textrm{EKI}:&& \frac{d u^{(j)}}{dt} &= \frac1J \sum_{k=1}^J \left\langle \mG(u^{(k)}) - \bar\mG, y - \mG(u^{(j)}) + \sqrt{\Sigma} \frac{dW^{(j)}}{dt} \right\rangle_{\Gamma} (u^{(k)} - \bar u)  \label{eq1:enkf_cont}\\
\textrm{EKS}:&& \frac{d u^{(j)}}{dt} &= \frac1J \sum_{k=1}^J \left\langle \mG(u^{(k)}) - \bar\mG, y - \mG(u^{(j)}) \right\rangle_{\Gamma} (u^{(k)} - \bar u) - C(u) \mC^{-1} u^{(j)} + \sqrt{2C(u)} \frac{dW^{(j)}}{dt} \label{eq2:enks_cont}
\end{align}
\end{subequations}
where $\bar u :=\frac1J\sum_{j=1}^J u^{(j)}$, $\bar \mG:=\frac1J\sum_{j=1}^J \mG(u^{(j)})$, $\Sigma=0$ or $\Gamma$, $\{W^{(j)}\}$ are independent cylindrical Brownian motions on $\mbY$, and $C(u):= \frac1J \sum_{j=1}^J  (u^{(j)} - \bar u) \otimes (u^{(j)} - \bar u)$.
Implemented in parallel, EnK algorithms converge quickly to the optimal parameter with a few (usually hundreds of) ensembles without explicit calculation of gradients. However, due to the collapse of ensembles \cite{Schillings_2017a,Schillings_2017b,deWiljes2018,chada2019}, the sample variance given by $\{u^{(j)}\}_{j=1}^J$ tends to underestimate the actual uncertainty \cite[see Figure 1 in][]{lan2022}.

CES recovers the proper uncertainty by running sampling algorithms based on emulators $\mG^e: \mbX \rightarrow \mbY$ trained on data $\{u_n^{(j)}, \mG(u_n^{(j)})\}_{j=1,n=0}^{J,N}$ that have been collected in the calibration stage.
The emulator can be GP \cite{cleary2020} or neural network (NN), e.g. convolutional NN (CNN) \cite{lan2022}, with the latter being preferred to the former for its computational efficiency and no need to design an optimal training set with controlled size.

Once the emulator is built, CES approximately samples from the posterior with dimension-independent MCMC algorithms based on the emulated likelihood $\Phi^e$ and its gradient $D\Phi^e$ (defined by substituting $\mG$ with $\mG^e$ in \eqref{eq:gauss_nz}) at much lower computational cost. 
A class of dimension-independent algorithms -- 
including \emph{preconditioned Crank-Nicolson (pCN)} \cite{cotter13}, \emph{infinite-dimensional MALA ($\infty$-MALA)} \cite{beskos08}, \emph{infinite-dimensional HMC ($\infty$-HMC)} \cite{beskos11}, and \emph{infinite-dimensional manifold MALA ($\infty$-mMALA) \cite{beskos14} and HMC ($\infty$-mHMC) \cite{beskos2017}} -- are used to overcome the deteriorating mixing time of traditional Metropolis-Hastings algorithms as the dimension of parameter space increases.
These algorithms can all be derived from the following Hamiltonian dynamics on manifold $(\mbX, \mK(u))$:
\begin{equation}\label{eq:mHD}
\frac{d^2u}{dt^2} + \mK(u)\, 
\big\{\, \mathcal{C}^{-1}u + \alpha \nabla\Phi(u) \big\} = 0, \quad \left. \left(v:= \frac{du}{dt}\right)\right|_{t=0} \sim\mathcal{N}(0,\mK(u))\ .
\end{equation}
where $\nabla\Phi(u)$ denotes the Fr\'echet derivative of $\Phi$, and $\mK(u)^{-1} = \mC^{-1} + \beta \mH(u)$
with $\mH(u)$ being chosen as Hessian.
They are implemented by numerically simulating \eqref{eq:mHD} for $I$ steps to generate a proposal $u'$ that is accepted with certain probability.
We have $\infty$-mHMC for $\alpha\equiv 1, \beta\equiv 1$.
With $\alpha\equiv 1, \beta\equiv 0$, it reduces to $\infty$-HMC. If we let $I=1$,
then $\infty$-mHMC reduces to $\infty$-mMALA, which becomes pCN further with $\alpha=0$.

\section{Spatiotemporal Inverse Problems (STIP)}\label{sec:stip}

When the observations are taken from a spatiotemporal process, $y(\bx,t)$, simple Gaussian likelihood function as \eqref{eq:gauss_nz} with $\Gamma=\sigma^2 I$, for example, may not be sufficient to describe the space-time interactions.
To address this issue, we propose to rewrite the data model \eqref{eq:forward} in terms of a GP with spatiotemporal kernel $\Gamma(\bx, t)$:
\begin{equation}\label{eq:forward_proc}
y(\bx, t)=\mG(u)(\bx, t) + \eta(\bx, t), \qquad \eta(\bx, t) \sim \GP(0, \Gamma(\bx, t))
\end{equation}

In practice, the forward model often involves time-dependent PDE, e.g. heat equation and Navier–Stokes equations. Therefore, it is crucial to allow for the spatiotemporal correlations in the statistical analysis of such inverse problems.
Compared to \eqref{eq:forward}, model \eqref{eq:forward_proc} offers a more appropriate definition of the likelihood by incorporating the spatiotemporal structures in the data.

Note, the proposed general framework \eqref{eq:forward_proc} also includes many existing statistical models as special cases. For example, if we define the forward map based on some covariates, $\bX(\bx,t)$,
$\mG(\beta)(\bx, t) = f(\beta, \bX(\bx,t))$,  e.g. $f(\beta, \bX(\bx,t))=\bX(\bx,t) \beta(\bx, t)$,
then \eqref{eq:forward_proc} is simply a regression model.
If we set $\mG(u)(\bx, t) = \bL(\bx, t) u(\bx,t)$ with loading matrix $\bL(\bx, t)$, then \eqref{eq:forward_proc} becomes a latent factor model.

In the following, we will introduce the static (Section \ref{sec:static}) and the time-averaged (Section \ref{sec:time-average}) models and unify them in the framework of STGP model (Section \ref{sec:stgp}).
For the convenience of exposition, we fix some notations in the following.
Denote $\bX:=\{\bx_i\}_{i=1}^I$, $\bt:=\{t_j\}_{j=0}^{J-1}$, and $\bY:=y(\bX, \bt)=\{y(\bx_i, t_j)\}_{i=1,j=0}^{I, J-1}$.
$\bC_*$ is the covariance matrix of the covariance kernel $\mC_*$ restricted on the finite-dimensional discrete space.

\subsection{Static model}\label{sec:static}
In the literature of Bayesian inverse problems, the noise $\eta$ is often assumed i.i.d. over time in \eqref{eq:forward_proc}, i.e. $\eta(\bx, t_j)\overset{iid}{\sim}\mN(0, \mC_\bx)$. This leads to the following static model where the temporal correlation is ignored:
\begin{equation}\label{eq:static}
\begin{aligned}
y(\bx, t) | u, \Gamma & \sim \GP(\mG(u)(\bx, t), \Gamma(\bx, t)) \\
\textrm{static}: \qquad \Gamma(\bx, t) &= \mC_\bx \otimes \mI_t\\
\end{aligned}
\end{equation}
where $\mI_t$ is the Dirac operator such that $\mI_t(t, t')=1$ only if $t=t'$.
When the spatial dependence is also suppressed (as in the advection-diffusion example of Section \ref{sec:adif} and in \cite{villa2020, lan2022}), we have $\mC_\bx=\sigma^2_\eps \mI_\bx$.

\begin{figure}[t]
\includegraphics[width=1\textwidth,height=.3\textwidth]{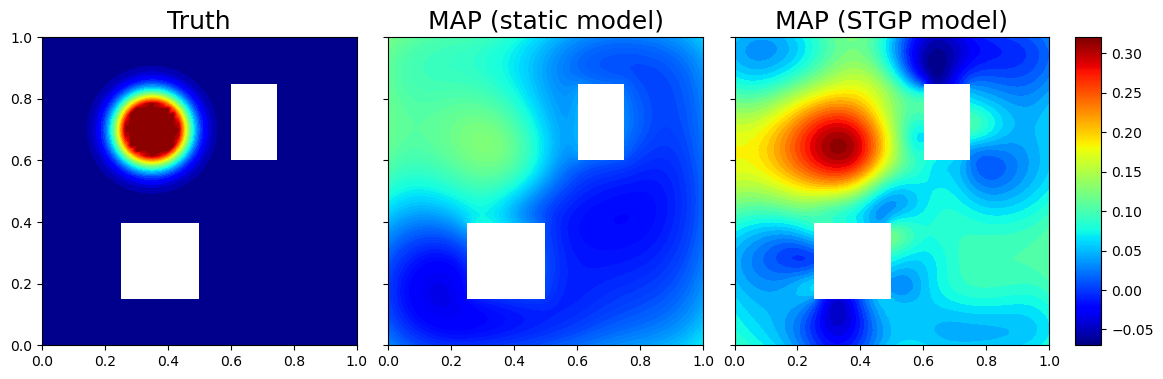}
\caption{Advection-diffusion inverse problem: comparing maximum \emph{a posteriori} (MAP) estimates of parameter $u_0=u(\bx, 0)$ by the static model (middle) and the STGP model (right) with the truth $u_0^\dagger$ (left).}
\label{fig:adif_comparelik}
\end{figure}

Temporal correlation is disregarded in the static model \eqref{eq:static}. 
When there is (spatio-)temporal effect in the residual $\eta$, the static model \eqref{eq:static} may be insufficient to account for the spatiotemporal relationships contained in the data.
For illustration, we consider an inverse problem involving advection-diffusion (Section \ref{sec:adif}) equation \cite{villa2020,lan2022} of an evolving concentration field $u(\bx, t)$, e.g., temperature for heat transfer,
and seek the solution to the initial condition, $u_0=u(\bx, 0)$, based on spatiotemporal solutions observed (through an observation operator $\mO$) on the boundaries of two boxes (Figure \ref{fig:adif_comparelik}, left panel) for a given time period, i.e. $y = \mO u(\bx, t) + \eta, \; \eta \sim N(0,\sigma^2_\eta)$.
As shown in Figure \ref{fig:adif_comparelik}, the simple static model \eqref{eq:static} used in \cite{lan2022} does not account for space-time interactions hence yields the result underestimating the true function $u_0^\dagger$ (left panel).
On the contrary, the estimate by the spatiotemporal model \eqref{eq:STGP} (right panel) is much closer to the truth. 

\subsection{Time-averaged model}\label{sec:time-average}
In many chaotic dynamics, we observe the trajectories as multivariate time series that are very sensitive to the initial condition and the parameters. This usually results in a complex objective function with multiple local minima \cite{Abarbanel_2013}. They in turn form a rough landscape of the objective and pose extreme difficulties on parameter learning \cite{cleary2020} (See also Figure \ref{fig:lrz_pairpdf}). The time-averaged approach is commonly used in the spirit of extracting sufficient statistics from the raw data \cite{Fisher_1922}.

We consider the same data model as in \eqref{eq:forward_proc} with $\mG(u)$ being the observed solution $\bx(t; u, \bx_0)$ of the following chaotic dynamics ($r$-th order ODE) for a given parameter $u\in \mbR^p$:
\begin{equation}\label{eq:chaotic}
\dot \bx := \frac{d\bx}{dt} = f(t, \bx, \bx^{(1)},\cdots,\bx^{(r)}; u), \qquad
\bx(0) = \bx_0 \in \mbR^I
\end{equation}
That is, $\mG(u) = \mO \bx(t; u, \bx_0)$ with an observation operator $\mO$. At each time $t$, the observed vector could include components of $\bx$ and up to their $k$-th order interactions for $k\geq 1$. For example, if $\bx=[x_1,\cdots, x_I]$, we could include all the first and second order terms in the observation vector,
$\mO\bx = [x_1, \cdots, x_I, x_1^2, x_1x_2,\cdots, x_ix_j,\cdots, x_I^2]$.
Because the trajectories of $\mG(u)$ are usually complex, it is often to average them over time and consider the following forward mapping instead:
\begin{equation}
    \mG_T(u; \bx_0) := \frac{1}{T} \int_{t_0}^{t_0+T} \mO \bx(t; u, \bx_0) dt
\end{equation}
where $t_0$ is the spin-up time and $T$ is the window length for averaging the observed trajectories of the dynamics.

Following \cite{cleary2020}, we make the same assumption regarding the dynamical system \eqref{eq:chaotic}:
\begin{asump}
\begin{enumerate}
    \item For $u\in\mbX$, \eqref{eq:chaotic} has a compact attractor $\mathcal A$, supporting an invariant measure $\mu(d\bx; u)$. The system is ergodic, and the following limit of Law of Large Numbers (LLN) is satisfied: for $\bx_0\sim \mu(\cdot; u)$ fixed, with probability one,
    \begin{equation}
        \lim_{T\to \infty}\mG_T(u; \bx_0) = \mG_\infty(u):= \int_{\mathcal A} \mO \bx(t; u, \bx_0) \mu(d\bx; u)
    \end{equation}
    \item The Central Limit Theorem (CLT) holds quantifying the ergodicity: for $\bx_0\sim \mu(\cdot; u)$,
    \begin{equation}\label{eq:gaussian_asump}
       \mG_T(u; \bx_0) \overset{\cdot}{\sim} \mN\left( \mG_\infty(u), T^{-1}\Sigma(u) \right)
    \end{equation}
\end{enumerate}
\end{asump}

The limit $\mG_\infty(u)$ becomes independent of the initial condition $\bx_0$.
However, the finite-time truncation in $\mG_T(u; \bx_0)$, with different random initializations $\bx_0$, generates random errors from the limit $\mG_\infty(u)$, which are assumed approximately Gaussian.
Assume the data $y$ can be observed with a true parameter $u^\dagger$, i.e. $y=\mG_T(u^\dagger; \bx_0)$.
The following time-averaged model is usually adopted for the inverse problems involving chaotic dynamics \cite{cleary2020}:
\begin{equation}\label{eq:time-average}
\begin{aligned}
    y | u, \Sigma(u) &\sim \mN(\mG_\infty(u), T^{-1}\Sigma(u)) \\
    \text{time-average}: \qquad T^{-1}\Sigma(u) &\approx \Gamma_\textrm{obs}
\end{aligned}
\end{equation}
where the empirical covariance $\Gamma_\textrm{obs}$ can be estimated with $\mG_\tau(u; \bx_0)$ for $\tau\gg T$.

In practice, we replace $\mG_\infty(u)$ with $\mG_T(u; \bx_0)$ in \eqref{eq:time-average} and define the potential $\Phi_\textrm{\tiny T}(u)$ of parameter $u$ for the time-averaged model \eqref{eq:time-average} as follows:
\begin{equation}
    \Phi_\textrm{\tiny T}(u) = \half \Vert y- \mG_T(u; \bx_0) \Vert_{\Gamma_\textrm{obs}}^2
\end{equation}
If we observe the trajectories (without component interaction terms, i.e. $\mO\bx=\bx$) at discrete time points $\bt$ with $t_{J-1}=t_0+T$, then $\mO\bx(t; u)$ yields multivariate time series, denoted as $\bX(u)_{I\times J}=\bx(\bt; u)=[\bx(t_0;u),\cdots,\bx(t_{J-1};u)]$. Then we have
\begin{equation}
\mG_T(u; \bx_0) = \bar{\bX}(u) := \bX(u) \frac{\bm{1}_J}{J}, \quad
y = \bX(u^\dagger) \frac{\bm{1}_J}{J}, \quad
    \Gamma_\textrm{obs} = \bX(u^\dagger) \left[\bI_J - \frac{\bm{1}_J \tp{\bm{1}}_J}{J}\right] \tp{\bX(u^\dagger)}
\end{equation}
Denote $\bX_0=\bX(u)-\bX(u^\dagger)$. Therefore the potential $\Phi_\textrm{\tiny T}$ becomes
\begin{equation}
    \Phi_\textrm{\tiny T}(u) = \half \frac{\tp{\bm{1}}_J}{J} \tp{\bX}_0 \Gamma_\textrm{obs}^{-1} \bX_0 \frac{\bm{1}_J}{J} = \half \tr\left[ \frac{\bm{1}_J \tp{\bm{1}}_J}{J^2} \tp{\bX}_0 \Gamma_\textrm{obs}^{-1} \bX_0 \right]
\end{equation}

Note, averaging the trajectories over time does not ease the difficulty of rough landscape, see for instance Figure \ref{fig:lrz_pairpdf} for an illustration. However, the potential function for the following STGP model \eqref{eq:STGP} is more convex around the true values $u^\dagger$ compared with the time-averaged approach \eqref{eq:time-average}.

The aforementioned two approaches, the static model \eqref{eq:static} and the time-averaged model \eqref{eq:time-average}, can be recognized as special cases of a more general framework of spatiotemporal modeling based on STGP, to be discussed in the following section.

\subsection{Spatiotemporal GP model}\label{sec:stgp}
For the spatiotemporal data $y(\bx, t)$ in the inverse problems, we consider the following likelihood model based on STGP:
\begin{equation}\label{eq:STGP}
\begin{aligned}
y(\bx, t) | u, \Gamma & \sim \GP(\mG(u)(\bx, t), \Gamma(\bx, t)) \\
\textrm{STGP}:\qquad \Gamma(\bx, t) &= \mC_\bx \otimes \mC_t
\end{aligned}
\end{equation}
where $\mC_\bx$ and $\mC_t$ are spatial and temporal kernel respectively.

If we observe the process $y(\bx, t)$ according to \eqref{eq:STGP}, the resulted data matrix $\bY=\mG(u)(\bX, \bt)$ follows the matrix normal distribution (denoted as `$\mMN$') \cite{Gupta_2018} for which we can also specify the above-mentioned three models
\begin{subequations}\label{eq:MatN}
\begin{align}
\bY\, |\, \bM, \bU, \bV &\sim \mMN(\bM, \bU, \bV), \quad \bM = \mG(u^\dagger)(\bX, \bt) \notag \\
\textrm{static}: \qquad & \bU_\textrm{\tiny S} = \sigma^2_\eps \bI_\bx, \quad \bV_\textrm{\tiny S} = \bI_t \label{eq1:cov_static} \\
\textrm{time-average}:\qquad & \bU_\textrm{\tiny T} = \Gamma_\textrm{obs}, \quad \bV_\textrm{\tiny T} = J^2 (\bm{1}_J \tp{\bm{1}}_J)^- \label{eq2:cov_Tavg} \\
\textrm{STGP}: \qquad & \bU_\textrm{\tiny ST} = \bC_\bx, \quad \bV_\textrm{\tiny ST} = \bC_t \label{eq3:cov_stgp}
\end{align}
\end{subequations}
where $\bY=\mO\bx(t; u)=\bX(u)$ for the static model and $M^-$ is the pseudo-inverse of $M$.


In all the above three models \eqref{eq:MatN}, we assume $\bY$ i.i.d. over $u$'s.
Denote $\Phi_*$ and $\mI_*$ as potential function and Fisher information matrix with $*$ being `S' for the static model \eqref{eq1:cov_static}, `T' for the time-averaged model \eqref{eq2:cov_Tavg} and `ST' for the STGP model \eqref{eq3:cov_stgp} respectively. 
The following theorem compares the convexity of their likelihoods and indicates that the STGP model \eqref{eq3:cov_stgp} with proper configuration has the advantage of parameter learning with the most convex likelihood among the three models.

\begin{thm}\label{thm:convexity1}
If we set the maximal eigenvalues of $\bC_\bx$ and $\bC_t$ such that $\lambda_{\max}(\bC_\bx)\lambda_{\max}(\bC_t)\leq \sigma^2_\eps$, then the following inequality holds regarding the Fisher information matrices, $\mI_\textrm{\tiny S}$ and $\mI_\textrm{\tiny ST}$, of the static model and the STGP model respectively:
\begin{equation}
    \mI_\textrm{\tiny ST}(u) \geq \mI_\textrm{\tiny S}(u)
\end{equation}
If we control the maximal eigenvalues of $\bC_\bx$ and $\bC_t$ such that $\lambda_{\max}(\bC_\bx)\lambda_{\max}(\bC_t)\leq J\lambda_{\min}(\Gamma_\textrm{obs})$, then the following inequality holds regarding the Fisher information matrices, $\mI_\textrm{\tiny T}$ and $\mI_\textrm{\tiny ST}$, of the time-averaged model and the STGP model respectively:
\begin{equation}
    \mI_\textrm{\tiny ST}(u) \geq \mI_\textrm{\tiny T}(u)
\end{equation}
\end{thm}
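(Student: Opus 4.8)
The plan is to reduce both inequalities to a single linear‑algebraic comparison. The key observation is that each of the three models in \eqref{eq:MatN} is Gaussian (matrix‑normal) with a covariance that does \emph{not} depend on $u$, so its Fisher information is a quadratic form in the Jacobian of the mean, and --- crucially --- all three means are built from the \emph{same} forward trajectory $\bX(u)$. Write $G(u):=\partial\,\VEC(\bX(u))/\partial\tp{u}$ for the Jacobian of the vectorized trajectory. Using the matrix‑normal identity $\tr[\bV^{-1}\tp{\bX_0}\bU^{-1}\bX_0]=\tp{\VEC(\bX_0)}(\bV^{-1}\otimes\bU^{-1})\VEC(\bX_0)$ and, for the time‑averaged model, $\partial\bar\bX(u)/\partial\tp{u}=(\tp{\bm{1}_J}/J\otimes\bI_\bx)G(u)$, one obtains a common form
\[
\mI_*(u)=\tp{G(u)}\,\Lambda_*^{-1}\,G(u),\qquad
\Lambda_\textrm{\tiny S}^{-1}=\sigma^{-2}_\eps\,\bI,\quad
\Lambda_\textrm{\tiny T}^{-1}=\tfrac1J\Big(\tfrac{\bm{1}_J\tp{\bm{1}_J}}{J}\Big)\otimes\Gamma_\textrm{obs}^{-1},\quad
\Lambda_\textrm{\tiny ST}^{-1}=\bC_t^{-1}\otimes\bC_\bx^{-1},
\]
where $\Lambda_\textrm{\tiny T}^{-1}$ uses the pseudo‑inverse of the rank‑deficient $\bV_\textrm{\tiny T}=\bm{1}_J\tp{\bm{1}_J}$. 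I would also remark here that, since every covariance involved ($\sigma^2_\eps$, $\Gamma_\textrm{obs}$, $\bC_\bx$, $\bC_t$) is fixed with respect to $u$, the Fisher information coincides with $\bbE[\nabla^2\Phi_*(u)]$, with no extra curvature contribution from a $u$‑dependent covariance.

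Given this, the theorem follows from comparing the \emph{fixed} precision operators on the discretized data space $\mbR^{IJ}$: because $A\succeq B\succeq 0$ implies $\tp{G(u)}AG(u)\succeq\tp{G(u)}BG(u)$, it suffices to show $\Lambda_\textrm{\tiny ST}^{-1}\succeq\Lambda_\textrm{\tiny S}^{-1}$ and $\Lambda_\textrm{\tiny ST}^{-1}\succeq\Lambda_\textrm{\tiny T}^{-1}$ in the L\"owner order, and these then hold for every $u$. For the static case, the spectrum of $\bC_t^{-1}\otimes\bC_\bx^{-1}$ is $\{\,1/(\mu\nu):\mu\in\mathrm{spec}(\bC_t),\ \nu\in\mathrm{spec}(\bC_\bx)\,\}$, so its smallest eigenvalue is $1/(\lambda_{\max}(\bC_t)\lambda_{\max}(\bC_\bx))$; the hypothesis $\lambda_{\max}(\bC_\bx)\lambda_{\max}(\bC_t)\le\sigma^2_\eps$ gives $\Lambda_\textrm{\tiny ST}^{-1}\succeq\sigma^{-2}_\eps\bI=\Lambda_\textrm{\tiny S}^{-1}$.

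For the time‑averaged case I would sandwich both sides between multiples of the identity. On one hand, $\bm{1}_J\tp{\bm{1}_J}/J$ is a rank‑one orthogonal projection, hence $\preceq\bI_t$, and $\Gamma_\textrm{obs}^{-1}\preceq\lambda_{\min}(\Gamma_\textrm{obs})^{-1}\bI_\bx$; by monotonicity of the Kronecker product on the PSD cone (via the chain $A\otimes B\preceq A'\otimes B\preceq A'\otimes B'$ for $0\preceq A\preceq A'$, $0\preceq B\preceq B'$), $\Lambda_\textrm{\tiny T}^{-1}\preceq\big(J\lambda_{\min}(\Gamma_\textrm{obs})\big)^{-1}\bI$. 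On the other hand, the eigenvalue computation above gives $\Lambda_\textrm{\tiny ST}^{-1}\succeq\big(\lambda_{\max}(\bC_\bx)\lambda_{\max}(\bC_t)\big)^{-1}\bI$. Hence the hypothesis $\lambda_{\max}(\bC_\bx)\lambda_{\max}(\bC_t)\le J\lambda_{\min}(\Gamma_\textrm{obs})$ yields $\Lambda_\textrm{\tiny ST}^{-1}\succeq\Lambda_\textrm{\tiny T}^{-1}$, and conjugating by $G(u)$ completes the proof.

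The main obstacle I anticipate is not any of these inequalities --- they are routine once the precisions are identified --- but the bookkeeping in the first step: verifying that every model's Fisher information really is $\tp{G(u)}\Lambda_*^{-1}G(u)$ with the \emph{same} $G(u)$. The delicate case is the time‑averaged model, where the mean is the time average $\bar\bX(u)$ rather than $\bX(u)$ (so the averaging operator $\tp{\bm{1}_J}/J$ must be folded into the Kronecker precision) and where $\bV_\textrm{\tiny T}$ is singular (so the likelihood is only meaningful through its quadratic form and one must argue with the pseudo‑inverse $\bV_\textrm{\tiny T}^-=\bm{1}_J\tp{\bm{1}_J}/J^2$). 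A secondary care point is to confirm that $\Gamma_\textrm{obs}$, though estimated from a trajectory of $u^\dagger$, is treated as a constant independent of the inference variable $u$, as in \eqref{eq:time-average}; otherwise the quadratic‑form identity above would acquire additional terms.
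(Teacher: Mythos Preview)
Your proposal is correct and follows essentially the same route as the paper: both express the Fisher information as $(\mI_*)_{ij}=\VEC(\partial\bM/\partial u_i)^{\mathsf T}(\bV_*^{-1}\otimes\bU_*^{-1})\VEC(\partial\bM/\partial u_j)$, reduce to comparing the Kronecker precisions in the L\"owner order, and then use the eigenvalue identity $\lambda_{\min}(\bC_t^{-1}\otimes\bC_\bx^{-1})=1/(\lambda_{\max}(\bC_t)\lambda_{\max}(\bC_\bx))$ together with $\lambda_{\max}(\bV_\textrm{\tiny T}^{-}\otimes\bU_\textrm{\tiny T}^{-1})=J^{-1}\lambda_{\min}(\Gamma_\textrm{obs})^{-1}$ to conclude. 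Your handling of the time-averaged case---folding the averaging operator into the Kronecker structure via the pseudo-inverse and bounding the rank-one projection by $\bI_t$---is exactly the content of the paper's eigenvalue computation, just stated more explicitly.
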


\begin{proof}
See Appendix \ref{apx:proof_convexity}.
\end{proof}

The following theorem considers a special case, $\bC_\bx=\Gamma_\textrm{obs}$, under milder condition in comparing the likelihood convexity of the time-averaged model and the STGP model.
\begin{thm}\label{thm:convexity2}
If we choose $\bC_\bx=\Gamma_\textrm{obs}$ and require the maximal eigenvalue of $\bC_t$, $\lambda_{\max}(\bC_t)\leq J$, then the following inequality holds regarding the Fisher information matrices, $\mI_\textrm{\tiny T}$ and $\mI_\textrm{\tiny ST}$, of the time-averaged model and the STGP model respectively:
\begin{equation}
    \mI_\textrm{\tiny ST}(u) \geq \mI_\textrm{\tiny T}(u)
\end{equation}
\end{thm}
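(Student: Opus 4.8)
The plan is to reduce the Loewner inequality $\mI_\textrm{\tiny ST}(u)\geq\mI_\textrm{\tiny T}(u)$ to a single comparison of the temporal kernels, following the same strategy one would use for Theorem~\ref{thm:convexity1} but exploiting the choice $\bC_\bx=\Gamma_\textrm{obs}$ to dispense with any spectral control of $\Gamma_\textrm{obs}$. First I would record the Fisher information of the matrix-normal likelihoods in \eqref{eq:MatN}. Writing $\mathbf{J}(u):=\pa_u\VEC(\mG(u)(\bX,\bt))$ for the Jacobian of the forward map, which is the \emph{same} object in the time-averaged and the STGP models since both use $\mG$, vectorization gives $\VEC(\bY)\sim\mN(\VEC\bM,\bV_*\otimes\bU_*)$ and hence
\[
\mI_*(u)=\tp{\mathbf{J}(u)}\,(\bV_*^{-1}\otimes\bU_*^{-1})\,\mathbf{J}(u),
\]
where for the rank-deficient time-averaged column covariance $\bV_\textrm{\tiny T}=\bm{1}_J\tp{\bm{1}}_J$ one uses the pseudo-inverse $\bV_\textrm{\tiny T}^{-}=J^{-2}\bm{1}_J\tp{\bm{1}}_J$ (one should check that this embedding of the genuinely lower-dimensional time-averaged model reproduces the explicit $\Phi_\textrm{\tiny T}$ given above, which it does via $\pa_u\VEC(\bX(u)\bm{1}_J/J)=(\tp{\bm{1}}_J/J\otimes\bI)\,\mathbf{J}(u)$). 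With $\bC_\bx=\Gamma_\textrm{obs}$ this yields $\mI_\textrm{\tiny ST}(u)=\tp{\mathbf{J}(u)}(\bC_t^{-1}\otimes\Gamma_\textrm{obs}^{-1})\mathbf{J}(u)$ and $\mI_\textrm{\tiny T}(u)=\tp{\mathbf{J}(u)}\big(J^{-2}\bm{1}_J\tp{\bm{1}}_J\otimes\Gamma_\textrm{obs}^{-1}\big)\mathbf{J}(u)$.

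Next I would invoke two elementary facts about the positive semidefinite order: $X\succeq Y$ implies $\tp{Z}XZ\succeq\tp{Z}YZ$ for any conformable $Z$; and $A\succeq B$ with $C\succeq 0$ implies $A\otimes C\succeq B\otimes C$, since $(A-B)\otimes C=(P\otimes Q)\tp{(P\otimes Q)}\succeq 0$ when $A-B=P\tp{P}$, $C=Q\tp{Q}$. Applying the first with $Z=\mathbf{J}(u)$ and the second with $C=\Gamma_\textrm{obs}^{-1}\succeq 0$, it suffices to prove the scalar-kernel inequality
\[
\bC_t^{-1}\ \succeq\ \frac{1}{J^2}\,\bm{1}_J\tp{\bm{1}}_J .
\]
This follows from eigenvalues: $\bm{1}_J\tp{\bm{1}}_J$ is rank one with largest eigenvalue $J$, so $J^{-2}\bm{1}_J\tp{\bm{1}}_J\preceq J^{-1}\bI_J$, while the hypothesis $\lambda_{\max}(\bC_t)\leq J$ gives $\bC_t^{-1}\succeq\lambda_{\max}(\bC_t)^{-1}\bI_J\succeq J^{-1}\bI_J$; chaining the two inequalities gives the claim, and the theorem follows.

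The computation itself is short, and the only point requiring genuine care is the first step: identifying both Fisher matrices as quadratic forms in one common Jacobian and correctly handling the singular $\bV_\textrm{\tiny T}$ through its pseudo-inverse. Everything after that is monotonicity of the PSD order under congruence and Kronecker products. The role of the special choice $\bC_\bx=\Gamma_\textrm{obs}$ is exactly to make the spatial Kronecker factors identical so that they cancel in the comparison; this is why only $\lambda_{\max}(\bC_t)\leq J$ is needed here, whereas the general bound in Theorem~\ref{thm:convexity1} (which, specialized to $\bC_\bx=\Gamma_\textrm{obs}$, would read $\lambda_{\max}(\bC_t)\leq J\,\lambda_{\min}(\Gamma_\textrm{obs})/\lambda_{\max}(\Gamma_\textrm{obs})$) must pay for the mismatch with the condition number of $\Gamma_\textrm{obs}$. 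If $\Gamma_\textrm{obs}$ is only positive semidefinite, one replaces $\Gamma_\textrm{obs}^{-1}$ by $\Gamma_\textrm{obs}^{-}$ throughout without affecting the argument.
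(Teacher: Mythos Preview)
Your argument is correct. Both you and the paper reduce the comparison to the temporal factor alone by using $\bU_\textrm{\tiny ST}=\bC_\bx=\Gamma_\textrm{obs}=\bU_\textrm{\tiny T}$, but the final step differs. The paper works with the quadratic form $\bW:=\sum_{i,j}w_i\,\E\!\left(\tfrac{\pa \tp{\bY}_0}{\pa u_i}\bU_*^{-1}\tfrac{\pa \bY_0}{\pa u_j}\right)w_j\succeq 0$ and shows $\tr[\bC_t^{-1}\bW]\geq \tr[\bV_\textrm{\tiny T}^{-}\bW]$ via Von~Neumann's trace inequality, using that $\bV_\textrm{\tiny T}^{-}=J^{-2}\bm{1}_J\tp{\bm{1}}_J$ has a single nonzero eigenvalue $J^{-1}$ while $\lambda_{\min}(\bC_t^{-1})\geq J^{-1}$. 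You instead establish the stronger Loewner inequality $\bC_t^{-1}\succeq J^{-1}\bI_J\succeq J^{-2}\bm{1}_J\tp{\bm{1}}_J$ directly and then push it through $\otimes\,\Gamma_\textrm{obs}^{-1}$ and the congruence $\tp{\mathbf J(u)}(\cdot)\mathbf J(u)$. Your route is more elementary (no appeal to Von~Neumann) and in fact yields a statement slightly stronger than the trace comparison the paper proves; the paper's route, on the other hand, makes the role of the single nonzero eigenvalue of $\bV_\textrm{\tiny T}^{-}$ more explicit. Your remark that the choice $\bC_\bx=\Gamma_\textrm{obs}$ is precisely what makes the spatial Kronecker factors cancel, so that no condition-number penalty on $\Gamma_\textrm{obs}$ is incurred, is the right way to read why the hypothesis here is milder than in Theorem~\ref{thm:convexity1}.
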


\begin{proof}
See Appendix \ref{apx:proof_convexity}.
\end{proof}

\begin{rk}
In general, $\Phi_*(u)$ is not the potential of a Gaussian distribution because of the possible non-linearity of $\mG(u)$. Theorems \ref{thm:convexity1} and \ref{thm:convexity2} indicate that for each $u\in\mbX$, the STGP model can have a more convex Gaussian proxy in the Laplace approximation.
\end{rk}

\begin{rk}
If we view Fisher information as a measurement of (statistical) convexity, the above theorems \ref{thm:convexity1} and \ref{thm:convexity2} indicate that the STGP model can have a likelihood more convex around the true parameter value than either the static model or the time-averaged model does. This implies that parameter learning method based on the STGP model could be more effective in the sense that it may converge faster. 
\end{rk}

Often we are interested in predicting the underlying process $y(\bx, t)$ at future time $t_*$ given the spatiotemporal observations $\bY$. Based on the STGP model \eqref{eq:STGP}, we could use the following posterior predicative distribution
\begin{equation}
p(y(\bx, t_*)|\bY) = \int p(y(\bx, t_*)| u, \bY) p (u|\bY) du
\end{equation}

Denote the conditional prediction $\E[y(\bx, t_*)\vert u,\bY]$ as 
\begin{equation}
    \mG^*(u)(\bx, t_*)=\underbrace{\mG(u)(\bx, t_*)}_{Physical} + \underbrace{\Gamma_{t_*\bt}\Gamma_{\bt\bt}^{-1}(\bY - \mG(u)(\bX, \bt))}_{Statistical}
\end{equation}
Then we predict $y(\bx, t_*)$ with the following predicative mean
\begin{equation}
    \E[y(\bx, t_*)|\bY] = \E_{u|\bY}[\E_{y_*|u,\bY}[y(\bx, t_*)]]
    = \E_{u|\bY}[\mG^*(u)(\bx, t_*)]
    \approx \bar{\mG}(\bx, t_*) + \Gamma_{t_*\bt}\Gamma_{\bt\bt}^{-1}(\bY - \bar{\mG}(\bX, \bt))
\end{equation}
where $\bar{\mG}(\bx, t_*):=\frac{1}{S} \sum_{s=1}^S \mG(u^{(s)})(\bx, t_*)$ with $u^{(s)} \sim p (u|\bY)$.
And we can quantify the uncertainty using the law of total conditional variance:
\begin{equation}
\begin{aligned}
\V[y(\bx, t_*)|\bY] &= \E_{u|\bY}[\V_{y_*|u,\bY}[y(\bx, t_*)]] + \V_{u|\bY}[\E_{y_*|u,\bY}[y(\bx, t_*)]]\\
&= \Gamma_{t_*t_*} - \Gamma_{t_*\bt} \Gamma^{-1}_{\bt\bt} \Gamma_{\bt t_*} + \V_{u|\bY}[\mG^*(u)(\bx, t_*)] \\
&\approx \Gamma_{t_*t_*} - \Gamma_{t_*\bt} \Gamma^{-1}_{\bt\bt} \Gamma_{\bt t_*} 
+ s^2_{\mG^*}(\bx, t_*)
\end{aligned}
\end{equation}
where $s^2_{\mG^*}(\bx, t_*):=\frac{1}{S} \sum_{s=1}^S [\mG^*(u^{(s)})(\bx, t_*)- \bar{\mG^*}(\bx, t_*)]^2$ with $u^{(s)} \sim p (u|\bY)$.

Assume $t_*\notin \bt$.
For static model \eqref{eq:static}, we have $\Gamma_{t_*\bt}=0$ thus $\mG^*(u)(\bx, t_*)=\mG(u)(\bx, t_*)$. Therefore we have the simplified results
\begin{equation}
    \E[y(\bx, t_*)|\bY] \approx \bar{\mG}(\bx, t_*), \qquad \V[y(\bx, t_*)|\bY] \approx \sigma^2_\eps + s^2_{\mG}(\bx, t_*)
\end{equation}
This may underestimate the uncertainty compared with the more general STGP model \eqref{eq:STGP}.
If we are only interested in predicting the forward map $\mG(u)$ to new time $t=t_*$, we actually have similar results
\begin{equation}\label{eq:fwd_pred}
    \E[\mG(u)(\bx, t_*)|\bY] \approx \bar{\mG}(\bx, t_*), \qquad \V[\mG(u)(\bx, t_*)|\bY] \approx s^2_{\mG}(\bx, t_*)
\end{equation}

Note all the above prediction is feasible only if we are able to solve ODE/PDE systems to time $t_*$, i.e. we can evaluate $\mG(u^{(s)})(\bx, t)$ at $t=t_*$. 
When we do not have the computer codes available for doing so, we could model $\mG(u)(\bx, t)$ with another GP $\GP(0, \Gamma^\mG)$ and further predict the forward mapping:
\begin{equation}
\mG(u)(\bx, t_*)| \mG(u)(\bX, \bt) \sim \mN(\Gamma^\mG_{t_*\bt} (\Gamma^\mG_{\bt\bt})^{-1} \mG(u)(\bX, \bt),  \Gamma^\mG_{t_*t_*} - \Gamma^\mG_{t_*\bt} (\Gamma^\mG_{\bt\bt})^{-1} \Gamma^\mG_{\bt t_*} )
\end{equation}

\section{Numerical Experiments}\label{sec:numerics}

In this section, we demonstrate the numerical advantage of spatiotemporal modeling in parameter estimation and UQ.
More specifically, we compare the STGP model \eqref{eq:STGP} with the static model \eqref{eq:static} using an advection-diffusion inverse problem (Section \ref{sec:adif}) previously considered in \cite{villa2020,lan2022} with the static method.
Then we compare the STGP model \eqref{eq:STGP} with the time-averaged model \eqref{eq:time-average} using three chaotic dynamical inverse problems (Section \ref{sec:chaotic}) of which the Lorenz problem (Section \ref{sec:Lorenz}) was studied by \cite{cleary2020} with the time-averaged approach.
Numerical evidences are presented to support that the STGP model \eqref{eq:STGP} is preferable to the other two models.
All the computer codes are publicly available at \url{https://github.com/lanzithinking/Spatiotemporal-inverse-problem}.

\subsection{Advection-diffusion inverse problem}\label{sec:adif}
In this section, we consider an inverse problem governed by a parabolic PDE within the Bayesian inference framework. 
The underlying PDE is a time-dependent advection-diffusion equation that can be applied to heat transfer, air pollution, etc. The inverse problem involves inferring an unknown initial condition $u_0\in L^2(\Omega)$ from spatiotemporal point measurements $\{y(\bx_i, t_j)\}$.

\begin{figure}[t]
\begin{subfigure}[b]{.5\textwidth}
\includegraphics[width=1\textwidth,height=1\textwidth]{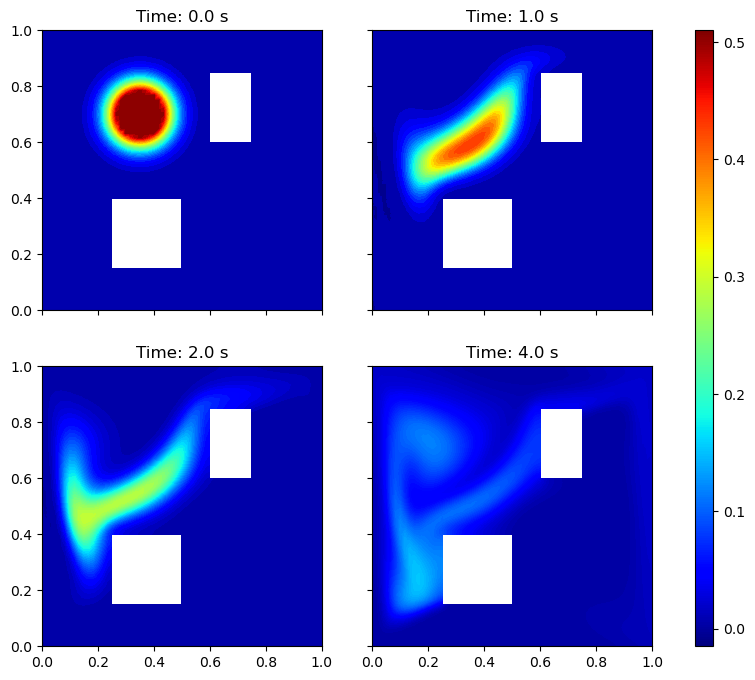}
\caption{True initial condition (top left), and the solutions $u(s)$ at different time points.}
\label{fig:time_soln}
\end{subfigure}
\hspace{2pt}
\begin{subfigure}[b]{.5\textwidth}
\includegraphics[width=1\textwidth,height=1\textwidth]{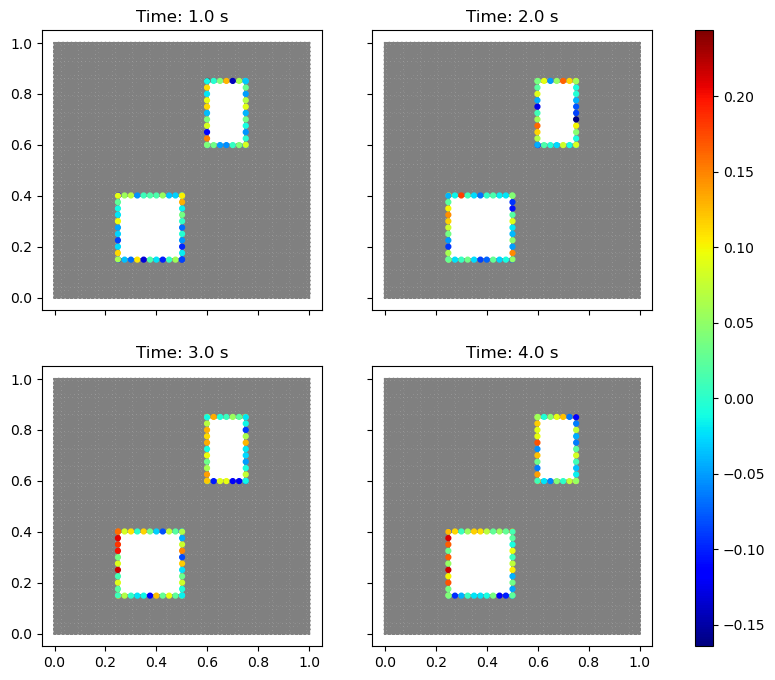}
\caption{Spatiotemporal observations at $80$ selected locations (color dots) across different time points.}
\label{fig:spatiotemporal_obs}
\end{subfigure}
\vspace{-20pt}
\caption{Advection-diffusion inverse problem.}
\end{figure}

The parameter-to-observable forward mapping $\mG : u_0 \to \mO u$ maps the initial condition $u_0$ to pointwise spatiotemporal observations of the concentration field $u(\bx, t)$ through the solution of the following advection-diffusion equation \citep{Petra2011,villa2020}:
\begin{equation}\label{eq:adif}
\begin{aligned}
u_t - \kappa \Delta u + \bv \cdot \nabla u &= 0 \quad in\; \Omega\times (0, T)\\
u(\cdot, 0) &= u_0 \quad in\; \Omega \\
\kappa \nabla u\cdot \vec n &= 0, \quad on\; \pa\Omega\times (0, T)
\end{aligned}
\end{equation}
where $\Omega \subset [0,1]^2$ is a bounded domain shown in Figure \ref{fig:time_soln}, $\kappa=10^{-3}$ is the diffusion coefficient, and $T>0$ is the final time.
The velocity field $\bv$ is computed by solving the following steady-state Navier-Stokes equation with the side walls driving the flow \citep{Petra2011}:
\begin{equation}\label{eq:adif_v}
\begin{aligned}
-\frac{1}{\mathrm{Re}} \Delta \bv + \nabla q + \bv \cdot \nabla \bv &= 0 \quad in\; \Omega\\
\nabla \cdot \bv &= 0 \quad in\; \Omega \\
\bv &={\bf g}, \quad on\; \pa\Omega
\end{aligned}
\end{equation}
Here, $q$ is the pressure, and $\mathrm{Re}$ is the Reynolds number, which is set to 100 in this example. The Dirichlet boundary data ${\bf g}\in \mbR^2$ is given by ${\bf g}={\bf e}_2=(0,1)$ on the left wall of the domain, ${\bf g}=-{\bf e}_2$ on the right wall, and ${\bf g}={\bf 0}$ everywhere else.


\begin{figure}[tbp]
\begin{subfigure}[b]{1\textwidth}
\includegraphics[width=1\textwidth,height=.5\textwidth]{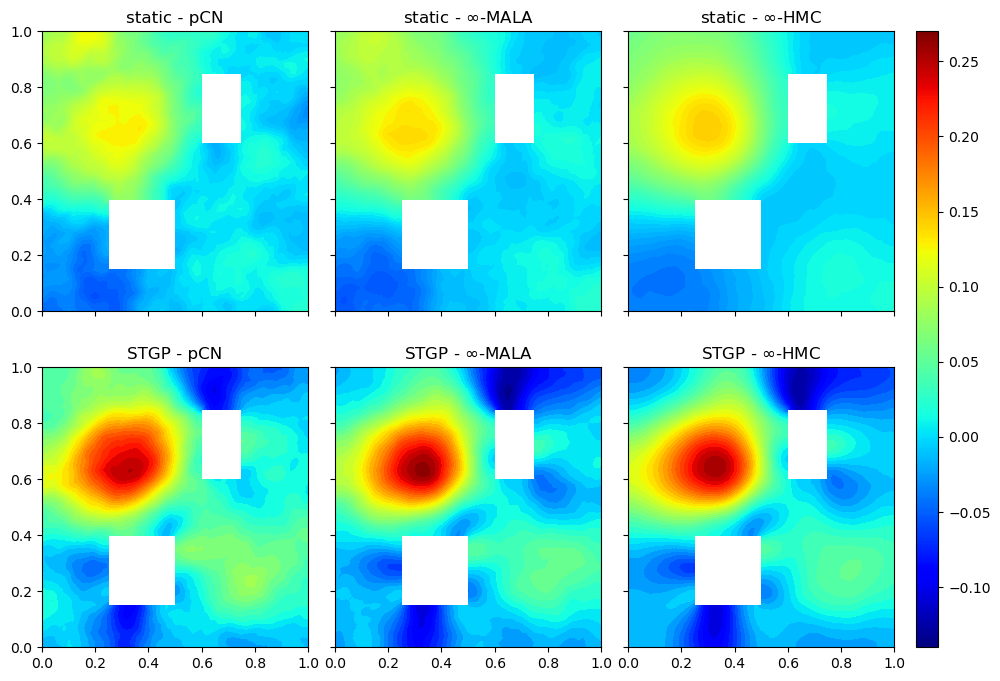}
\caption{Posterior mean estimates of the initial concentration field $u_0(\bx)$.}
\label{fig:adif_postmean_comparelik}
\end{subfigure}
\begin{subfigure}[b]{1\textwidth}
\includegraphics[width=1\textwidth,height=.5\textwidth]{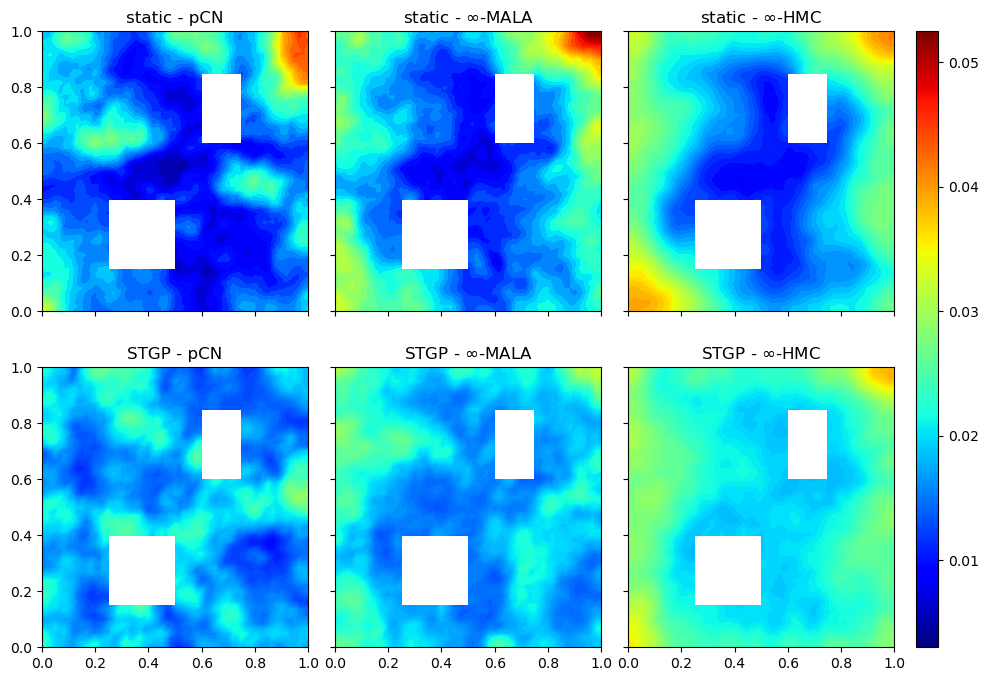}
\caption{Posterior standard deviation estimates of the initial concentration field $u_0(\bx)$.}
\label{fig:adif_poststd_comparelik}
\end{subfigure}
\vspace{-20pt}
\caption{Advection-diffusion inverse problem: comparing posterior estimates of parameter $u_0$ in the static model (upper row) and the STGP model (lower row) based on $5000$ samples by various MCMC algorithms.}
\end{figure}

We set the true initial condition $u_0^\dagger = 0.5\wedge \exp\{-100[(x-0.35)^2+(y-0.7)^2]\}$, illustrated in the top left panel of Figure \ref{fig:time_soln}, which also shows a few snapshots of solutions $u(\bx, t)$ at other time points on a regular grid mesh of size $61\times 61$.
To obtain spatiotemporal observations $\{y(\bx_i, t_j)\}$, we collect solutions $u(\bx, t)$ solved on a refined mesh at $I=80$ selected locations $\{\bx_i\}_{i=1}^I$ across $J=16$ time points $\{t_j\}_{j=1}^J$ evenly distributed between $1$ and $4$ seconds (thus denoted as $\mO u$) and inject some Gaussian noise $\mN(0, \sigma^2_\eta)$ such that the relative noise standard deviation is $\sigma_\eta/\max \mO u = 0.5$, i.e.,
\begin{equation*}
y = \mG(u_0^\dagger) = \mO u(\bx, t; u_0^\dagger) + \eta, \quad \eta \sim \mathcal N(0, \sigma_\eta^2 I_{1280})
\end{equation*}
Figure \ref{fig:spatiotemporal_obs} plots 4 snapshots of these observations at 80 locations along the inner boundary.
In the Bayesian setting, we adopt a GP prior for $u_0\sim \mu_0 = \GP(0, \mC)$ with the covariance kernel $\mC=(\delta \mI -\gamma \Delta )^{-2}$ defined through the Laplace operator $\Delta$,
where $\delta$ governs the variance of the prior and $\gamma/\delta$ controls the correlation length. We set $\gamma=2$ and $\delta=10$ in this example.

The Bayesian inverse problem involves obtaining an estimate of the initial condition $u_0$ and quantifying its uncertainty based on the $80\times 16$ spatiotemporal observations.
The Bayesian UQ in this example is especially challenging not only because of its large dimensionality (3413) of spatially discretized $u$ (Lagrange degree 1) at each time $t$, but also due to the spatiotemporal correlations in these observations.


\begin{table}[ht]\small
\centering
\begin{tabular}{l|lll|lll}
\toprule
 & \multicolumn{3}{|c|}{Estimation} & \multicolumn{3}{c}{Prediction} \\
\cmidrule{2-4} \cmidrule{5-7}
Models &          pCN & $\infty$-MALA & $\infty$-HMC  &  pCN & $\infty$-MALA & $\infty$-HMC \\
\midrule
static & 0.83 (0.023) &  0.81 (0.011) & 0.79 (0.005)  &  0.43 (0.013) &   0.4 (0.006) &  0.4 (0.003)\\
 STGP & 0.74 (0.021) &  0.73 (0.012) & 0.73 (0.003) &  0.44 (0.068) &  0.32 (0.016) & 0.31 (0.005)\\
\bottomrule
\end{tabular}
\caption{Advection-diffusion inverse problem: comparing (i) posterior estimates of parameter $u_0$ in terms of relative error of mean $\mathrm{REM}=\frac{\Vert \hat u_0 - u_0^\dagger \Vert}{\Vert u_0^\dagger \Vert}$ and (ii) the forward predictions $\mG(u)(\bx,t_*)$ in terms of relative error $\frac{\Vert \bar{\mG}(\bx,t_*) - \mG(u_0^\dagger)(\bx,t_*) \Vert}{\Vert \mG(u_0^\dagger)(\bx,t_*) \Vert}$ by two likelihood models (static and STGP). Each experiment is repeated for 10 runs of MCMC (pCN, $\infty$-MALA, and $\infty$-HMC respectively) and the numbers in the bracket are standard deviations of these repeated experiments.}
\label{tab:adif_comparelik}
\end{table}

We compare two likelihood models \eqref{eq:static} and \eqref{eq:STGP}.
The static model \eqref{eq:static} is commonly used in the literature of Bayesian inverse problems \cite{LAN2019a,villa2020,lan2022}. Here the STGP model \eqref{eq:STGP} is considered to better account for the spatiotemporal relationships in the data. 
We estimate the variance parameter of the joint kernel from data. The correlation length parameters are determined ($\ell_\bx=0.5$ and $\ell_t=0.2$) by investigating their autocorrelations as in Figure \ref{fig:adif_obs_acf}.
Figure \ref{fig:adif_comparelik} compares the maximum a posterior (MAP) of the parameter $u_0$ by the two likelihood models (right two panels) with the true parameter $u_0^\dagger$ (left panel). The STGP model yields a better MAP estimate closer to the truth compared with the static model.

We also run MCMC algorithms (pCN, $\infty$-MALA, and $\infty$-HMC) to estimate $u_0$. For each algorithm, we run 6000 iterations and burn in the first 1000. The remaining 5000 samples are used to obtain the posterior estimate $\hat u_0$ (Figure \ref{fig:adif_postmean_comparelik}) and posterior standard deviation (Figure \ref{fig:adif_poststd_comparelik}). The STGP model \eqref{eq:STGP} consistently generates estimates closer to the true values (refer to Figure \ref{fig:adif_comparelik}) with smaller posterior standard deviation than the static model \eqref{eq:static} using various MCMC algorithms.
Such improvement of parameter estimation by the STGP model \eqref{eq:STGP} is also verified by smaller relative error of mean estimates $\mathrm{REM}=\frac{\Vert \hat u_0 - u_0^\dagger \Vert}{\Vert u_0^\dagger \Vert}$ reported in Table \ref{tab:adif_comparelik}, which summarizes the results of 10 repeated experiments with their standard deviations in the brackets. 


\begin{figure}[ht]
\includegraphics[width=1\textwidth,height=.3\textwidth]{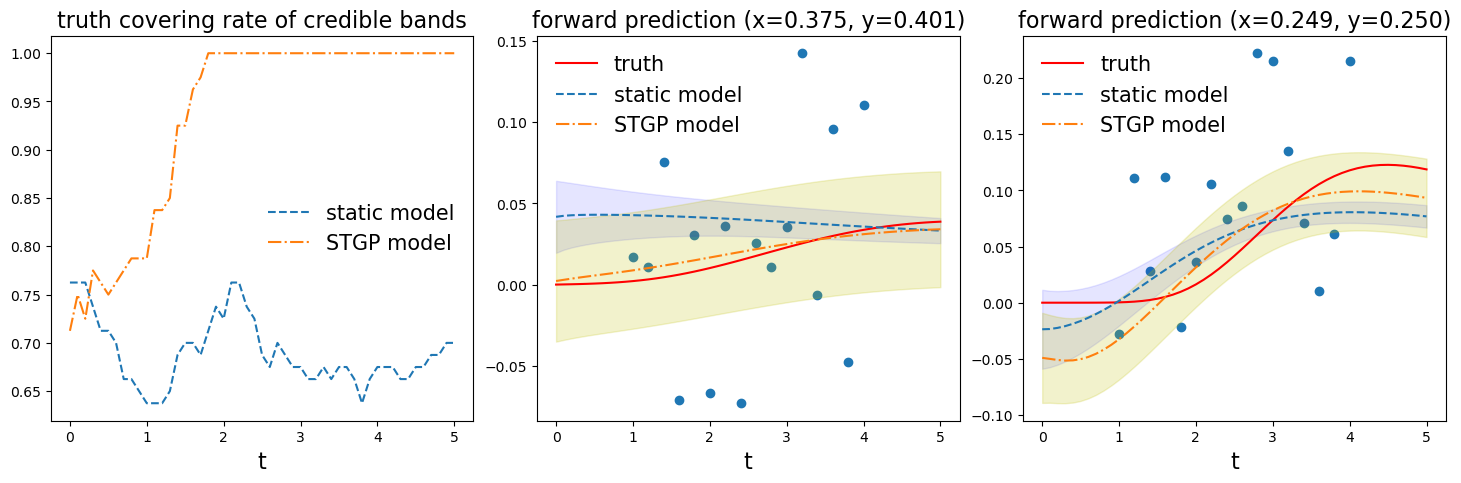}
\caption{Advection-diffusion inverse problem: comparing forward predictions, $\bar{\mG}(\bx, t_*)$, based on the static model and the STGP model. The left panel plots the curves representing the percentage of 80 (corresponding to the selected locations) credible bands that cover the true solution $\mG(u_0^\dagger)(\bx,t_*)$ at each time $t_*\in [0,5]$. The right two panels show the predicted time series (blue dashed and orange dot-dashed lines) along with the credible bands (shaded regions) by the two models compared with the truth (red solid line) at two selective locations $\bx=(0.375, 0.401)$ and $\bx=(0.249, 0.250)$. Blues dots are observations.
}
\label{fig:adif_pred_comparelik}
\end{figure}

Finally, we consider the forward prediction \eqref{eq:fwd_pred} over the time interval $[0,5]$.
We substitute each of the 5000 samples $\{u^{(s)}\}_{s=1}^{5000}$ generated by $\infty$-HMC into $\mG(u^{(s)})(\bx, t_*)$ to solve the advection-diffusion equation \eqref{eq:adif} for $t_*\in [0,5]$. We observe each of these 5000 solutions at the 80 locations (Figure \ref{fig:spatiotemporal_obs}) for $50$ points equally spaced in $[0,5]$. Then we obtain the prediction by $\bar{\mG}(\bx,t_*)_{80\times 50}=\frac{1}{5000}\sum_{s=1}^{5000}\mG(u^{(s)})(\bx, t_*)$, and compute the relative errors in terms of the Frobenius norm of the difference between the prediction and the true solution $\mG(u_0^\dagger)(\bx,t_*)$: $\frac{\Vert \bar{\mG}(\bx,t_*) - \mG(u_0^\dagger)(\bx,t_*) \Vert}{\Vert \mG(u_0^\dagger)(\bx,t_*) \Vert}$. Table \ref{tab:adif_comparelik} shows the STGP model \eqref{eq:STGP} provides more accurate predictions with smaller errors compared with the static model \eqref{eq:static}.
Figure \ref{fig:adif_pred_comparelik} depicts the predicted time series $\bar{\mG}(\bx, t_*)$ at two selective locations based on the static (blue dashed line) and the STGP (orange dot-dashed line) models along with their credible bands (shaded regions) compared with the truth (red solid lines) in the two right panels. Note that with smaller credible bands, the static model is more certain about its prediction that is further away from the truth. While the STGP model provides wider credible bands that cover more of the true trajectories, indicating a more appropriate uncertainty being quantified. Therefore, on the left panel of Figure \ref{fig:adif_pred_comparelik}, the STGP model has higher truth covering rate for its credible intervals among these $80$ locations on most of $t_*\in[0,5]$. Note these models are trained on $t\in[1,4]$, so the STGP model does not show much advantage at the beginning but quickly outperforms the static model after $t_*=1$.

\begin{figure}[t]
\includegraphics[width=.66\textwidth,height=.35\textwidth]{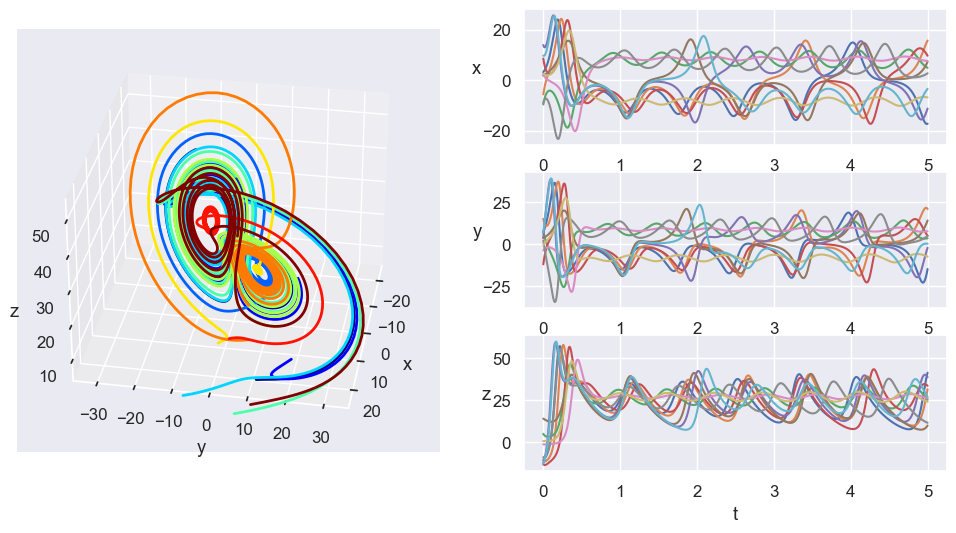}
\includegraphics[width=.33\textwidth,height=.35\textwidth]{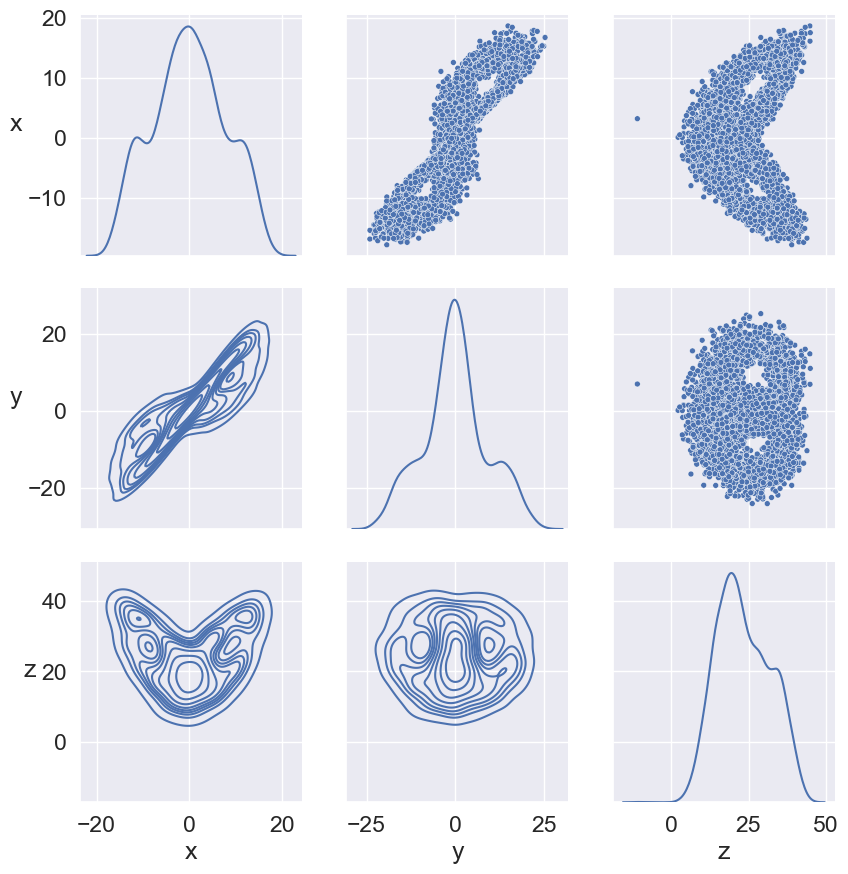}
\caption{Lorenz63 dynamics: two-lobe orbits (left), chaotic solutions (middle) and coordinates' distributions (right).}
\label{fig:lrz_chaotic}
\end{figure}

\subsection{Chaotic dynamical inverse problems}\label{sec:chaotic}
Chaos, refers to the behavior of a dynamical system that appears to be random in long term even its evolution is fully determined by the initial condition. 
Many physical systems are characterized by the presence of chaos that has been extensively demonstrated \cite{Lorenz_1963,Ivancevic_2008,Bishop_2017}. 
The main challenges of analyzing chaotic dynamical systems include the stability, the transitivity, and the sensitivity to the initial conditions (which contributes to the seeming randomness) \cite{effah2018study}. In the study of chaotic dynamical systems, one of the interests is determining the essential system parameters given the observed data.
In this section, we will investigate three chaotic dynamical systems, Lorenz63 \cite{Lorenz_1963}, R\"ossler \cite{agiza2001synchronization} and Chen \cite{yassen2003chaos}, that can be summarized as the first-order ODE: 
    $\dot\bx = f(\bx; u)$. 
We will apply the CES framework (Section \ref{sec:BUQ}) to learn the system parameter $u$ and quantify its associated uncertainty based on the observed trajectories. We find the spatiotemporal models numerically more advantageous by fitting the whole trajectories than the common approach by averaging the trajectories over time \cite{Schneider2017,cleary2020,Huang_2022}.

\subsubsection{Lorenz system}\label{sec:Lorenz}
The most popular example of chaotic dynamics is the Lorenz63 system \cite[named after the author and the year it was proposed in][]{Lorenz_1963} that represents a simplified model of atmospheric convection  for the chaotic behavior of the weather. 
The governing equations of the Lorenz system are given by the following ODE
\begin{equation}\label{eq:Lorenz}
\begin{cases}
    \dot{x} &= \sigma (y-x),\\
    \dot{y} &= x(\rho -z)-y,\\
    \dot{z} &= x y -\beta z, 
\end{cases}
\end{equation}
where $x$, $y$, and $z$ denote variables proportional to convective intensity, horizontal and vertical temperature differences and $u:=(\sigma, \rho, \beta)$ represents the model parameters known as Prandtl number ($\sigma$), Rayleigh number ($\rho$), and an unnamed parameter ($\beta$) used for physical proportions of the regions \cite{ott1981strange}.

\begin{figure}[t]
\includegraphics[width=.49\textwidth,height=.4\textwidth]{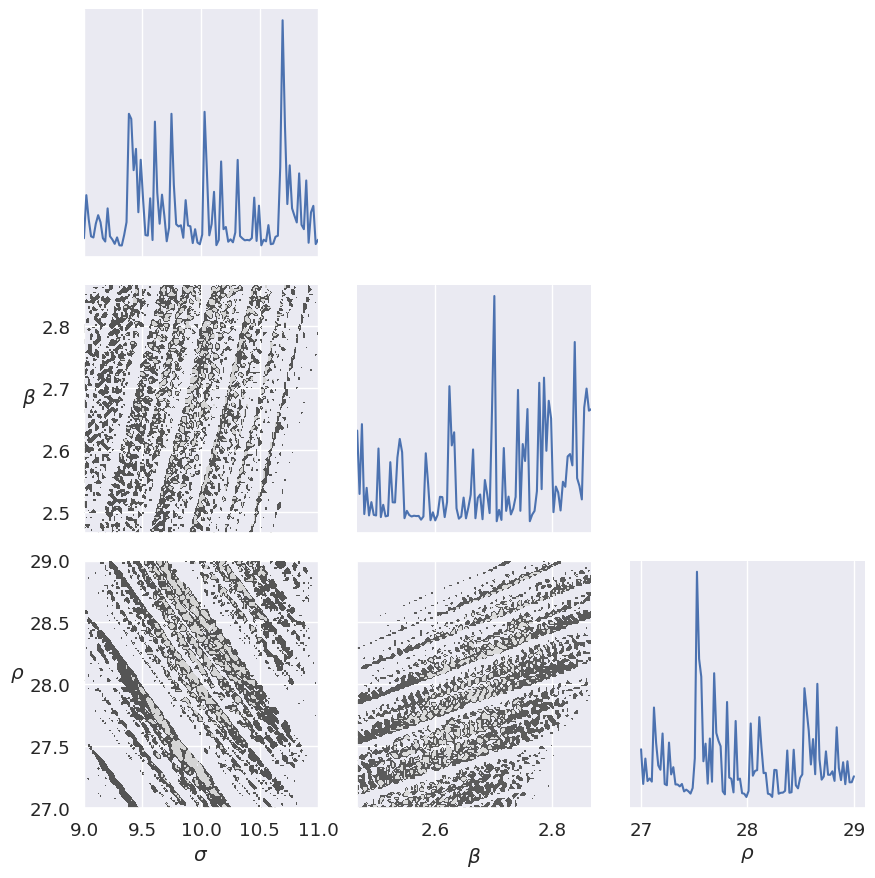}
\includegraphics[width=.49\textwidth,height=.4\textwidth]{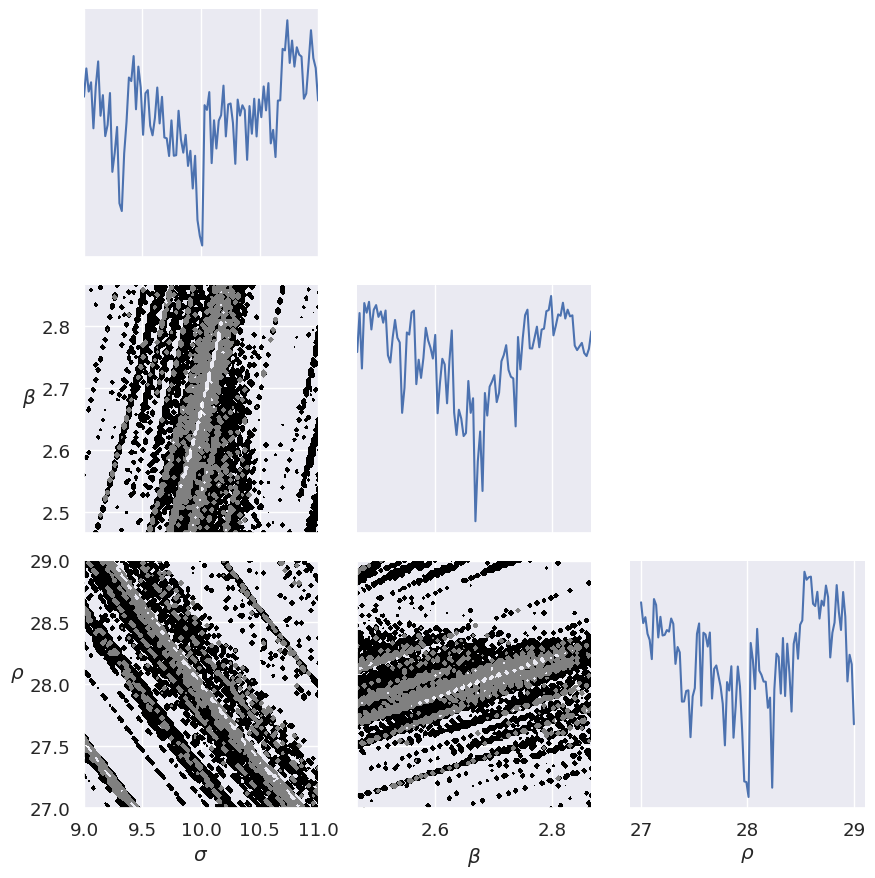}
\caption{Lorenz inverse problem: marginal (diagonal) and pairwise (lower triangle) sections of the joint density $p(u)$ by the time-averaged model (left) and the STGP model (right) respectively.}
\label{fig:lrz_pairpdf}
\end{figure}

The behavior of Lorenz63 system \eqref{eq:Lorenz} strongly relies on the parameters. In many studies, the parameter $\rho$ varies in $(0, \infty)$ and the other parameters $\sigma$ and $\beta$ are held constant. 
In particular, \eqref{eq:Lorenz} has a stable equilibrium point at the origin for $\rho \in (0, 1)$. For $\rho \in (1, \gamma)$ with $\gamma = \sigma\frac{\sigma + \beta + 3}{\sigma - \beta -1}$, \eqref{eq:Lorenz} has three equilibrium points, one unstable equilibrium point at the origin and two stable equilibrium points at $\tp{(\sqrt{\beta(\rho-1)}, \sqrt{\beta(\rho-1)}, \rho -1)}$ and  $\tp{(-\sqrt{\beta(\rho-1)}, -\sqrt{\beta(\rho-1)}, \rho -1)}$. When $\rho > \gamma$, the equilibrium points become unstable and it results in unstable spiral shaped trajectories. 
One classical configuration for the parameter in \eqref{eq:Lorenz} is $\sigma = 10$, $\beta = \frac{8}{3}$, $\rho = 28$ 
when the system exhibits two-lobe orbits, also known as the butterfly effect \cite{yang2002control} (See the left panel of Figure \ref{fig:lrz_chaotic}).
In this example, we seek to infer such parameter $u^\dagger=(\sigma^\dagger, \beta^\dagger, \rho^\dagger)= (10, 8/3, 28)$ based on the observed chaotic trajectories demonstrated in the middle panel of Figure \ref{fig:lrz_chaotic}.
Note the solutions $(x(t), y(t), z(t))$ highly depend on the initial conditions $(x(0), y(0), z(0))$, we hence fix $(x(0), y(0), z(0))$ in the following.

Due to the chaotic nature of the states $\{(x(t), y(t), z(t)) : t\in [0, \tau]\}$, we can treat these coordinates as random variables. In right panel of Figure \ref{fig:lrz_chaotic}, we demonstrate their marginal and pairwise distributions (diagonal and lower triangle) estimated by a collection of states (upper triangle) along a long-time trajectory solved with $u^\dagger$.
For a given parameter $u=(\sigma, \beta, \rho)$, we have the trajectory $\mG(u)$ as the following map:
\begin{equation}\label{eq:traj}
\mG(u): \; \mbR_+ \rightarrow \mbR^3, \quad t \mapsto (x(t;u), y(t;u), z(t;u))
\end{equation}
where $(x(t;u), y(t;u), z(t;u))$ is the solution of \eqref{eq:Lorenz} for given parameter $u$.
We generate spatiotemporal data from the chaotic dynamics \eqref{eq:Lorenz} with $u^\dagger=(\sigma^\dagger, \beta^\dagger, \rho^\dagger)$ by observing its trajectory on $J=100$ equally spaced time points $t_j\in[t_0, t_0+T]$: $\bX(u^\dagger)_{3\times 100}:=\{\mG(u^\dagger)(t_j)=(x(t_j;u^\dagger), y(t_j;u^\dagger), z(t_j;u^\dagger))\}_{j=1}^J$. These observations can be viewed as a 3-dimensional time series that estimate the empirical covariance $\Gamma_\textrm{obs}$ as in \cite{cleary2020}.
The inverse problem involves learning the parameter $u$ given these observations, also known as parameter identification \cite{Negrini_2021}.

\begin{figure}[t]
\includegraphics[width=1\textwidth,height=.5\textwidth]{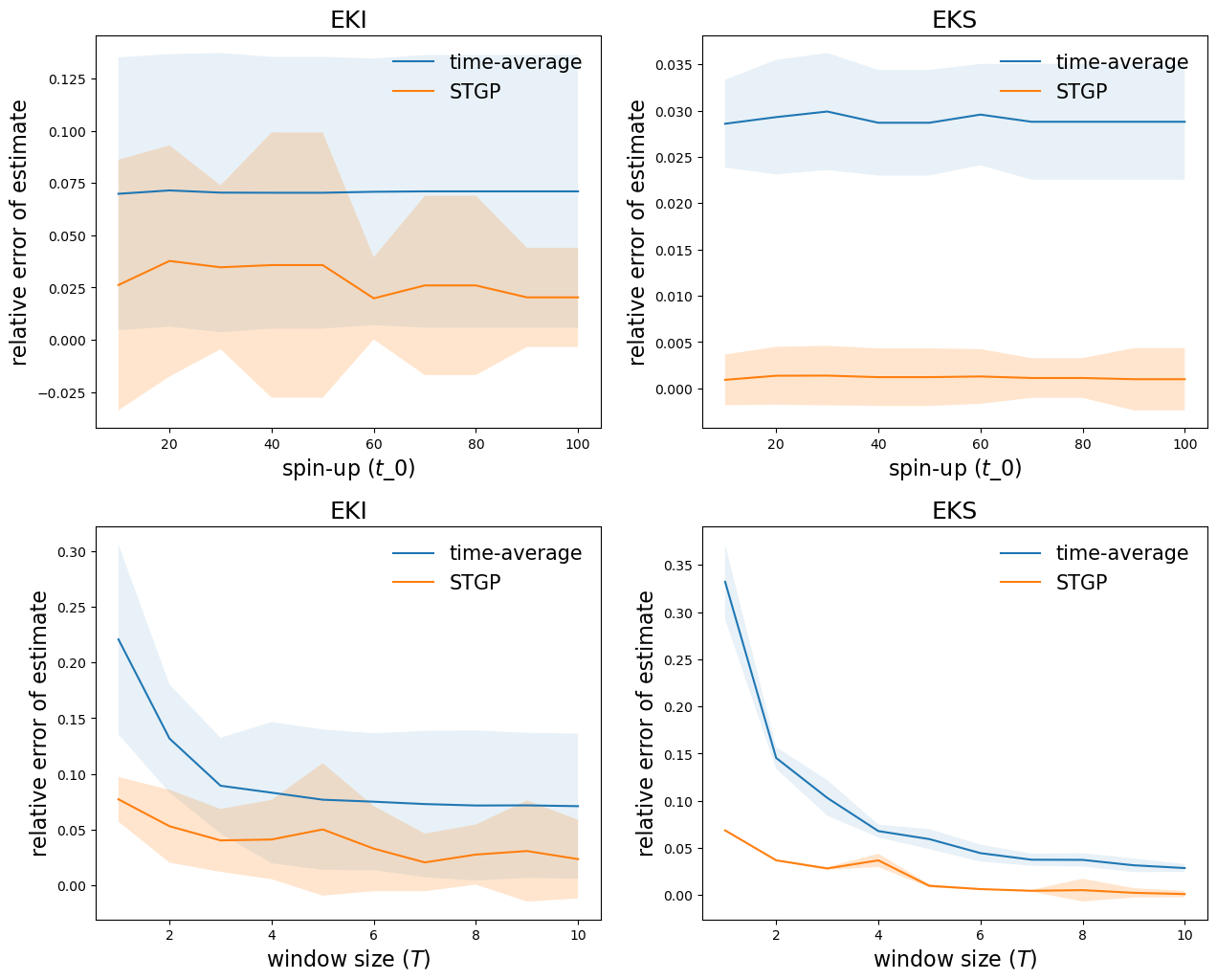}
\caption{Lorenz inverse problem: comparing posterior estimates of parameter $u$ for two models (time-average and STGP) in terms of relative error of mean $\mathrm{REM}=\frac{\Vert \hat u - u^\dagger \Vert}{\Vert u^\dagger \Vert}$. The upper row shows the results by varying the spin-up $t_0$ and fixing $T=10$. The lower row shows the results by varying the observation window size $T$ and fixing $t_0=100$. Each experiment is repeated for 10 runs of EnK (EKI and EKS respectively) with $J=500$ ensembles and the shaded regions indicate standard deviations of such repeated experiments.}
\label{fig:lrz_rem_spinavgs}
\end{figure}

Following \cite{cleary2020}, we endow a log-Normal prior on $u$: $\log u \sim \mN(\mu_0, \sigma_0^2)$ with $\mu_0=(2.0, 1.2, 3.3)$ and $\sigma_0=(0.2, 0.5, 0.15)$. 
We compare the two likelihood models \eqref{eq:time-average} and \eqref{eq:STGP} for this dynamical inverse problem.
For the time-averaged model \eqref{eq:time-average}, instead of the 3-dimensional time series from the trajectory \eqref{eq:traj}, we substitute $\bar{\bX}(u)_{3\times 1}$ with $\bar{\bX^\star}(u)_{9\times 1}=\mO \mG^\star(u)(t)$ by averaging the following augmented trajectory $\mG^\star(u)(t)$ in time \cite{cleary2020}:
\begin{equation*}
\mG^\star(u)(t) = (x(t), y(t), z(t), x^2(t), y^2(t), z^2(t), x(t)y(t), x(z)z(t), y(t)z(t))
\end{equation*}
For the spatiotemporal likelihood model STGP \eqref{eq:STGP}, we set the correlation length $\ell_\bx=0.4$ and $\ell_t=0.1$ for the spatial kernel $\mC_\bx$ and the temporal kernel $\mC_t$ respectively. They are chosen to reflect the spatial and temporal resolutions.

We first notice that the spatiotemporal modeling facilitates the learning of the true parameter $u^\dagger$.
As illustrated in Figure \ref{fig:lrz_pairpdf}, despite of the rough landscape, the marginal (e.g. $p(\sigma, \beta^\dagger, \rho^\dagger)$) and pairwise (e.g. $p(\sigma, \beta, \rho^\dagger)$) sections of the joint density $p(u)$ by the STGP model \eqref{eq:STGP} are more convex in the neighbourhood of $u^\dagger$ compared with the time-averaged model \eqref{eq:time-average}. 
This verifies the implication of Theorem \ref{thm:convexity2} on their difference in convexity.
Therefore, particle based algorithms such as EnK methods have higher chance of concentrating their ensemble particles around the true parameter value $u^\dagger$, leading to better estimates.
Here, the roughness of the posterior creates barrier for the direct application of MCMC algorithms. Therefore, we apply more robust EnK methods for parameter estimation.

We run each of the EnK algorithms for $N=50$ iterations and choose the ensembles (of size $J$) when its ensemble mean attains the minimal error in estimating the parameter $u$ with reference to its true value $u^\dagger$.
In practice, EnK algorithms usually converge quickly within a few iterations so $N=50$ suffices the need for most applications.

To investigate the roles of spin-up length $t_0$ and observation window size $T$, we run EnK multiple times while varying each of the two quantities one at a time. Seen from Figure \ref{fig:lrz_rem_spinavgs}, we observe consistently smaller errors by the STGP model \eqref{eq:STGP} compared with the time-averaged model \eqref{eq:time-average}. More specifically, the upper row indicates that the estimation errors, measured by $\mathrm{REM}=\frac{\Vert \hat u - u^\dagger \Vert}{\Vert u^\dagger \Vert}$, are not every sensitive to the spin-up $t_0$ given sufficient window size $T=10$. On the other hand, for fixed spin-up $t_0=100$, both models decrease errors with increasing window size $T$ as they aggregate more information. However, the STGP model requires only about $\frac{1}{4}$ time length as the time-averaged model to attain accuracy at the same level ($T=1$ vs $T=4$). This supports that the STGP is preferable to the time-average approach as the former may add a small overhead for the statistical inference but could save much more in resolving the physics (solving ODE/PDE), which is usually more expensive.

\begin{figure}[t]
\includegraphics[width=.49\textwidth,height=.4\textwidth]{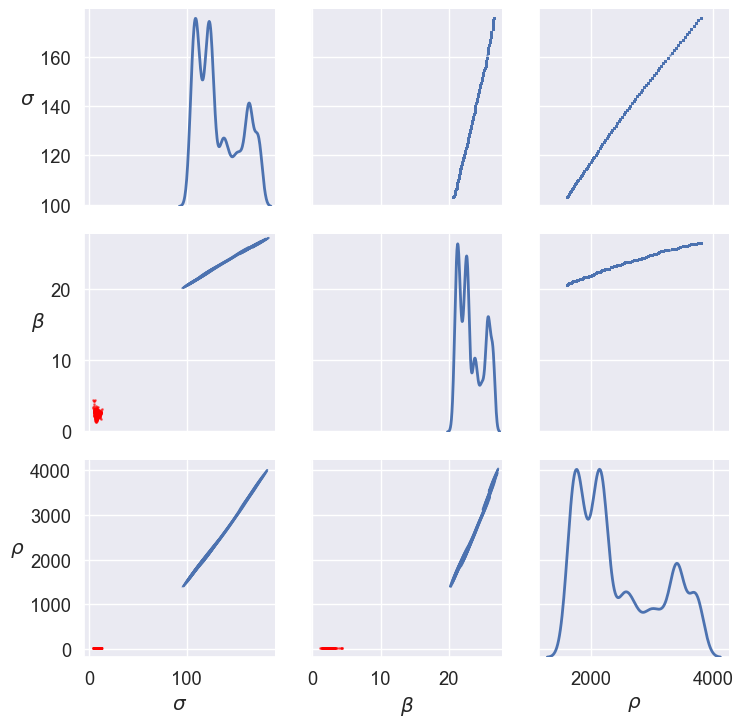}
\includegraphics[width=.49\textwidth,height=.4\textwidth]{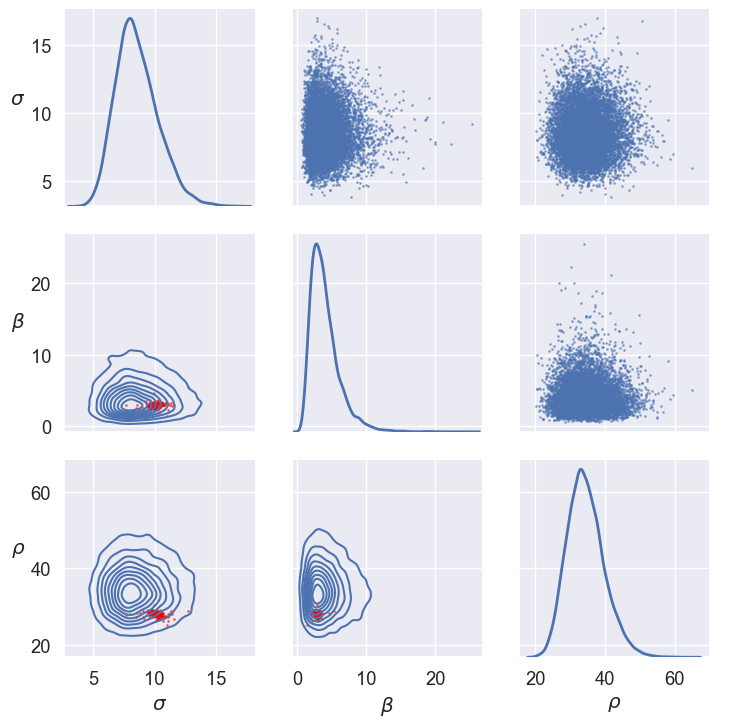}
\caption{Lorenz inverse problem: marginal (diagonal) and pairwise (lower triangle) distributions estimated with 10000 samples (upper triangle) by the pCN algorithm based on NN emulators for the time-averaged model (left) and the STGP model (right) respectively. Red dots (lower triangle) are selective 10000 ensemble particles from running the EKS algorithm.}
\label{fig:lrz_UQ}
\end{figure}

\begin{table}[ht]\scriptsize
\centering
\begin{tabular}{l|lllll}
\toprule
Model-Algorithms &        J=50 &           J=100 &           J=200 &               J=500 &              J=1000 \\
\midrule
      Tavg-EKI & 0.06 (0.03) &     0.09 (0.03) &     0.09 (0.01) &         0.06 (0.04) &         0.07 (0.02) \\
      Tavg-EKS & 0.10 (0.02) & 0.07 (4.62e-03) & 0.05 (2.60e-03) &     0.03 (3.04e-03) &     0.03 (8.56e-04) \\
       STGP-EKI & 0.07 (0.03) &     0.04 (0.03) &     0.03 (0.02) &         0.02 (0.03) &         0.02 (0.01) \\
       STGP-EKS & 0.09 (0.03) &     0.05 (0.03) &     0.03 (0.02) & 3.97e-04 (1.06e-03) & 5.52e-04 (6.37e-04) \\
\bottomrule
\end{tabular}
\caption{Lorenz inverse problem: comparing posterior estimates of parameter $u$ for two models, time-average (Tavg) and STGP, in terms of relative error of median $\mathrm{REM}=\frac{\Vert \hat u - u^\dagger \Vert}{\Vert u^\dagger \Vert}$. Each experiment is repeated for 10 runs of EnK (EKI and EKS respectively) and the numbers in the bracket are standard deviations of such repeated experiments.}
\label{tab:lrz_rem}
\end{table}

Now we set spin-up $t_0=100$ long enough to ignore effect of initial condition in the dynamics and choose the observation window size $T=10$.
We compare the two models \eqref{eq:time-average} \eqref{eq:STGP} using EnK algorithms with different ensemble sizes ($J$) to obtain an estimate $\hat u$ of the parameter $u$. Figure \ref{fig:lrz_rem} shows that the STGP model performs better than the time-averaged model in generating smaller errors (REM) for almost all cases. In general more ensembles help reduce the errors except for the time-averaged model using EKI algorithm. Note, the STGP model with EKS algorithm yields parameter estimates with the lowest errors.
Table \ref{tab:lrz_rem} summarizes the REM's by different combinations of the two likelihood models (time-averaged and STGP) and two EnK algorithms (EKI and EKS). Again we can see consistent advantage of the spatiotemporal likelihood model STGP \eqref{eq:STGP} over the simple time-averaged model \eqref{eq:time-average} in producing more accurate parameter estimation.

\begin{figure}[t]
\includegraphics[width=1\textwidth,height=.3\textwidth]{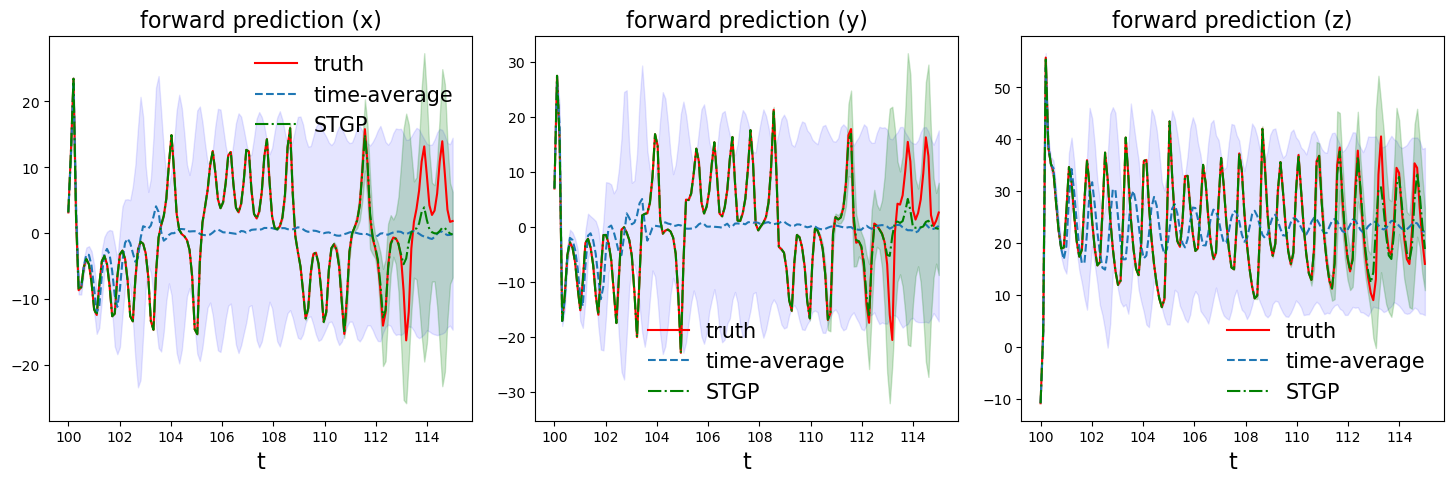}
\caption{Lorenz inverse problem: comparing forward predictions $\bar{\mG}(\bx, t_*)$ based on the time-averaged model and the STGP model.}
\label{fig:lrz_pred_comparelik}
\end{figure}

Next, we apply CES (Section \ref{sec:BUQ}) \cite{cleary2020,lan2022} to quantify the uncertainty of the estimate $\hat u$. Direct application of MCMC suffers from the extremely low acceptance rate because of the rough density landscape (Figure \ref{fig:lrz_pairpdf}). Ensemble particles from EnK algorithm cannot provide rigorous systematic UQ due to the ensemble collapse \cite{Schillings_2017a,Schillings_2017b,deWiljes2018,chada2019} (See red dots in Figure \ref{fig:lrz_UQ}). Therefore, we run approximate MCMC based on NN emulators built from EnK outputs $\{u_n^{(j)}, \mG(u_n^{(j)})\}_{j=1,n=0}^{J,N}$.
Note, we have different structures for the observed data in the two models \eqref{eq:time-average} \eqref{eq:STGP}: 9-dimensional summary of time series for the time-averaged model \eqref{eq:time-average} and $3\times 100$ time series for the STGP model \eqref{eq:STGP}. Therefore we build densely connected NN (DNN) $\mG^e: \mbR^3\to\mbR^9$ for the former and DNN-RNN (recurrent NN) type of network $\mG^e: \mbR^3\to\mbR^{3\times 100}$ for the latter to account for their different data structures in the forward output.
Figure \ref{fig:lrz_UQ} compares the marginal (diagonal) and pairwise (lower triangle) posterior densities of $u$ estimated by 10000 samples (upper triangle) of the pCN algorithm based on the corresponding NN emulators for the two models. The spatiotemporal model STGP \eqref{eq:STGP} yields more reasonable UQ results compared with the time-averaged model \eqref{eq:time-average}.

Finally, we consider the forward prediction $\bar{\mG}(\bx, t_*)$ \eqref{eq:fwd_pred} for $t_*\in[t_0, t_0+1.5T]$ with $J=500$ EKS ensembles corresponding to the lowest error.
Figure \ref{fig:lrz_pred_comparelik} compares the prediction results given by these two models. The result by the STGP model is very close to the truth till $t=113$ while the prediction by the time-averaged model quickly departs from the truth only after $t=102$. The STGP model predicts the future of this challenging chaotic dynamics significantly better than the time-averaged model.

\subsubsection{R\"ossler system}\label{sec:Rossler}
\begin{figure}[t]
\includegraphics[width=.6\textwidth,height=.35\textwidth]{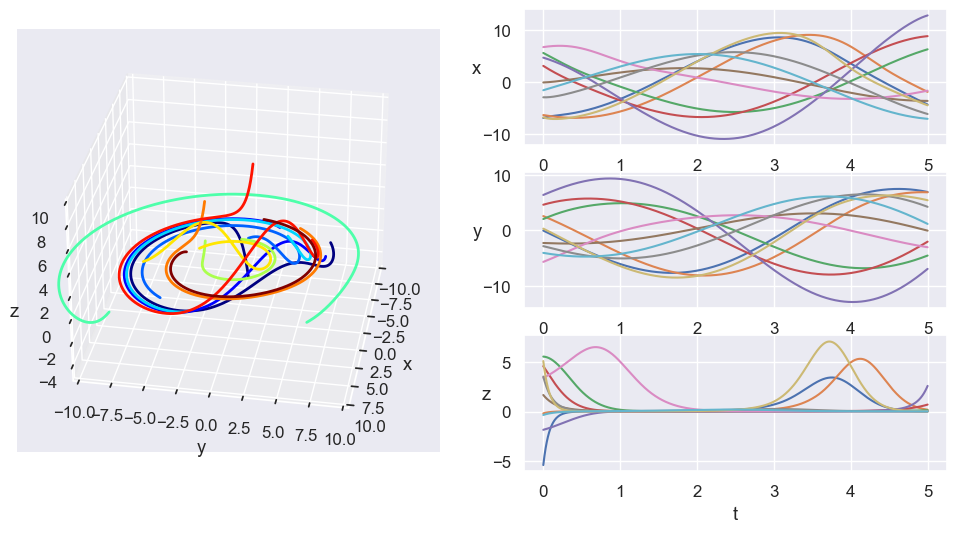}
\includegraphics[width=.33\textwidth,height=.35\textwidth]{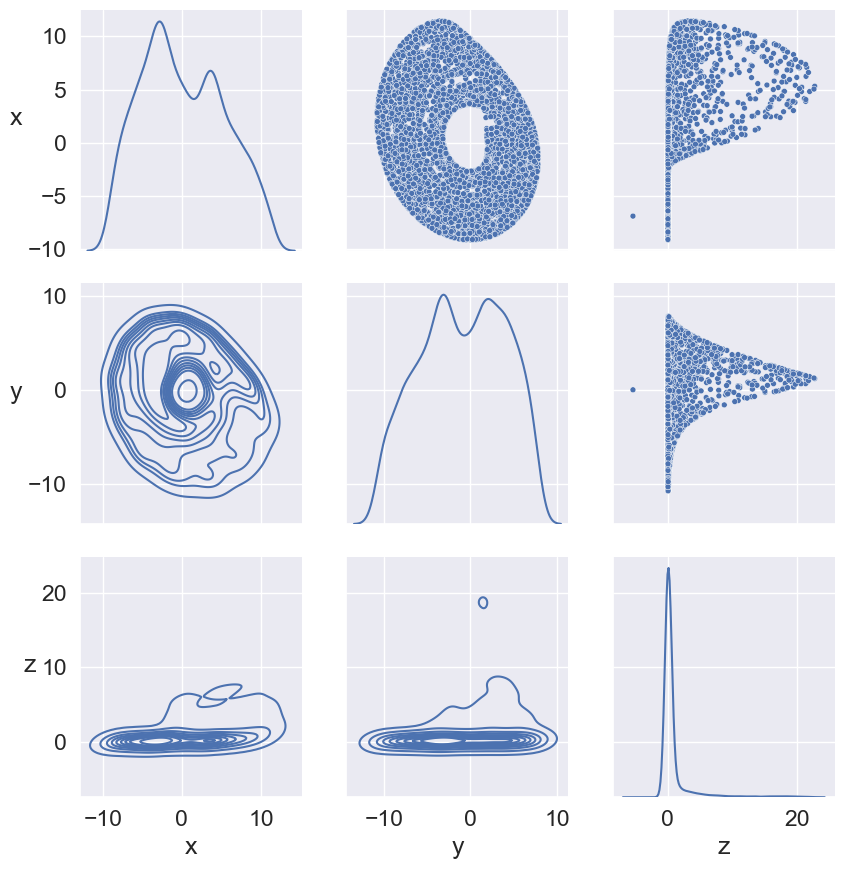}
\caption{R\"ossler dynamics: single-lobe orbits (left), chaotic solutions (middle) and coordinates' distributions (right).}
\label{fig:rsl_chaotic}
\end{figure}


Next we consider the following R\"ossler system \cite{hegazi2001controlling} governed by the system of autonomous differential equations:
\begin{equation}\label{eq:Rossler}
\begin{cases}
    \dot{x} &= -y - z, \\
    \dot{y} &= x+a y, \\
    \dot{z} &= b + z(x-c). 
\end{cases}
\end{equation}
where $a, b, c>0$ are parameters determining the behavior of the system.
The R\"ossler attractor was originally discovered by German biochemist Otto Eberhard Rössler \cite{ROSSLER_1976,ROSSLER_1979}. 
When $c^2 > 4ab$,
the system \eqref{eq:Rossler} exhibits continuous-time chaos and has two unstable equilibrium points $(a\gamma_{-}, -\gamma_{-}, \gamma_{-})$ and $(a\gamma_{+}, -\gamma_{+}, \gamma_{+})$ with $\gamma_{+} = \frac{c + \sqrt{c^2-4ab}}{2a},\, \gamma_{-} = \frac{c - \sqrt{c^2-4ab}}{2a}$.
Note that the R\"ossler attractor has similarities to the Lorenz attractor, nevertheless it has a single lobe and offers more flexibility in the qualitative analysis. The true parameter that we try to infer is $u^\dagger=(a^\dagger, b^\dagger, c^\dagger)=(0.2, 0.2, 5.7)$. 
Figure \ref{fig:rsl_chaotic} illustrates the single-lobe orbits (left), the chaotic solutions (middle) and their marginal and pairwise distributions (right) of their coordinates viewed as random variables.

Note, the R\"ossler dynamics evolve at a lower rate compared with the Lorenz63 dynamics (compare the middle panels of Figure \ref{fig:rsl_chaotic} and Figure \ref{fig:lrz_chaotic}). Therefore, we adopt a longer spin-up length ($t_0=1000$) and a larger window size ($T=100$).
For the STGP model \eqref{eq:STGP}, spatiotemporal data are generated by observing the trajectory \eqref{eq:traj} of the chaotic dynamics \eqref{eq:Rossler} with $u^\dagger=(a^\dagger, b^\dagger, c^\dagger)$ for $J=100$ time points in $[t_0,t_0+T]$.
We also augment the time-averaged data with second-order moments for the time-averaged model \eqref{eq:time-average}.
In this Bayesian inverse problem, we adopt a log-Normal prior on $u$: $\log u \sim \mN(\mu_0, \sigma_0^2)$ with $\mu_0=(-1.5, -1.5, 2.0)$ and $\sigma_0=(0.15, 0.15, 0.2)$. 
Once again, with spatiotemporal likelihood model STGP \eqref{eq:STGP}, learning the true parameter value $u^\dagger$ becomes easier because the posterior density $p(u)$ concentrates more on $u^\dagger$ compared with the time-averaged model \eqref{eq:time-average}, as indicated by Theorem \ref{thm:convexity2}. See Figure \ref{fig:rsl_pairpdf} for the comparison on their marginal and pairwise sections of the joint density $p(u)$.

\begin{figure}[t]
\includegraphics[width=1\textwidth,height=.5\textwidth]{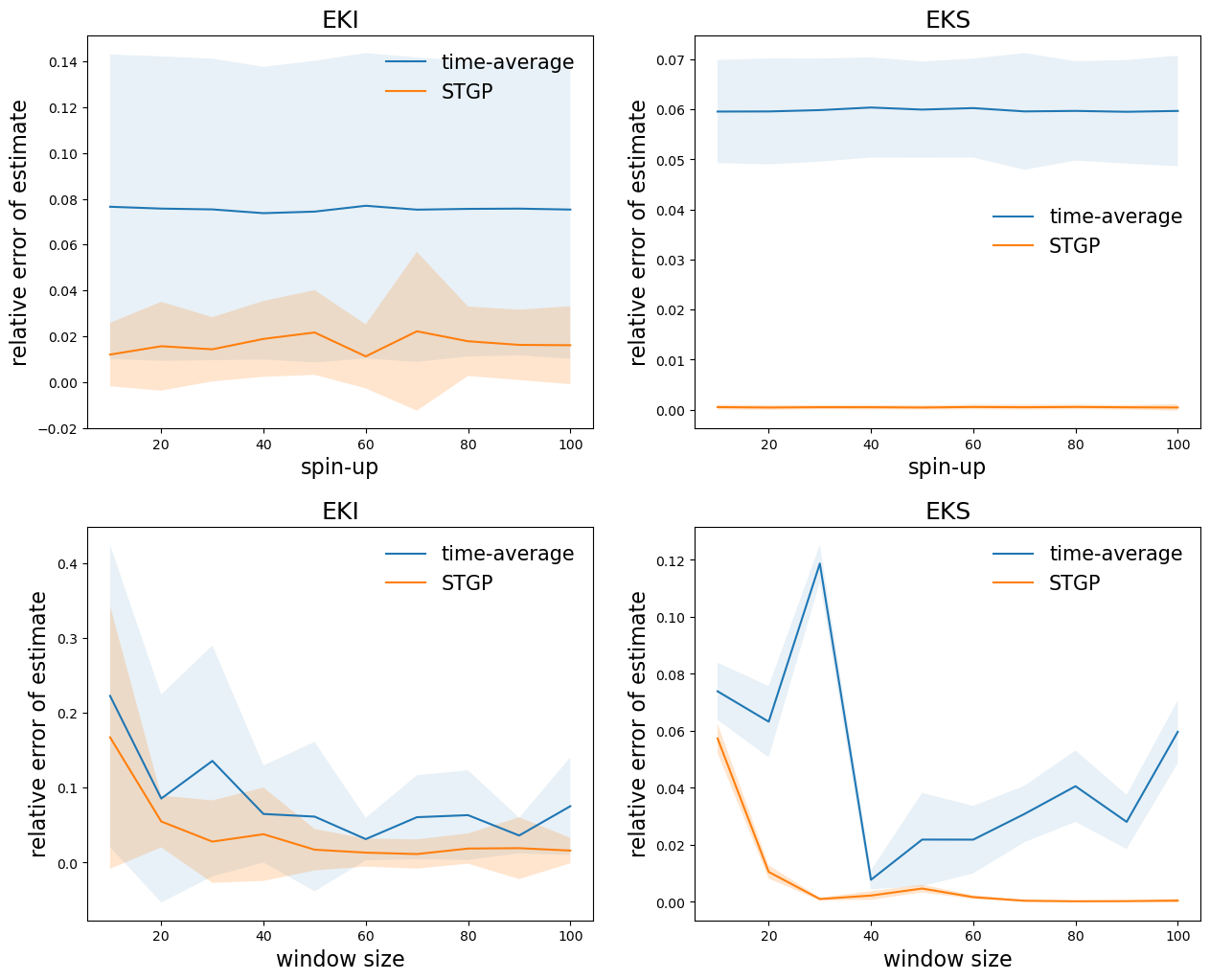}
\caption{R\"ossler inverse problem: comparing posterior estimates of parameter $u$ for two models (time-average and STGP) in terms of relative error of mean $\mathrm{REM}=\frac{\Vert \hat u - u^\dagger \Vert}{\Vert u^\dagger \Vert}$. The upper row shows the results by varying the spin-up $t_0$ and fixing $T=100$. The lower row shows the results by varying the window size $T$ and fixing $t_0=100$. Each experiment is repeated for 10 runs of EnK (EKI and EKS respectively) with $J=500$ ensembles and the shaded regions indicate standard deviations of such repeated experiments.}
\label{fig:rsl_rem_spinavgs}
\end{figure}

We also compare the two models \eqref{eq:time-average} \eqref{eq:STGP} when investigating the roles of spin-up length $t_0$ and observation window size $T$ in Figure \ref{fig:rsl_rem_spinavgs}. Despite of the consistent smaller errors (expressed in terms of REM) by the STGP model, REM is not every sensitive to the spin-up $t_0$ given sufficient window size $T=100$.
However, for fixed spin-up $t_0=100$, the STGP model \eqref{eq:STGP} is superior than the time-averaged approach \eqref{eq:time-average} in reducing the estimation error using smaller observation time window $T$: the former requires only half time length as the latter to attain the same level of accuracy ($T=30$ vs $T=60$ with EKI and $T=20$ vs $T=40$ with EKS).

\begin{figure}[t]
\includegraphics[width=.49\textwidth,height=.4\textwidth]{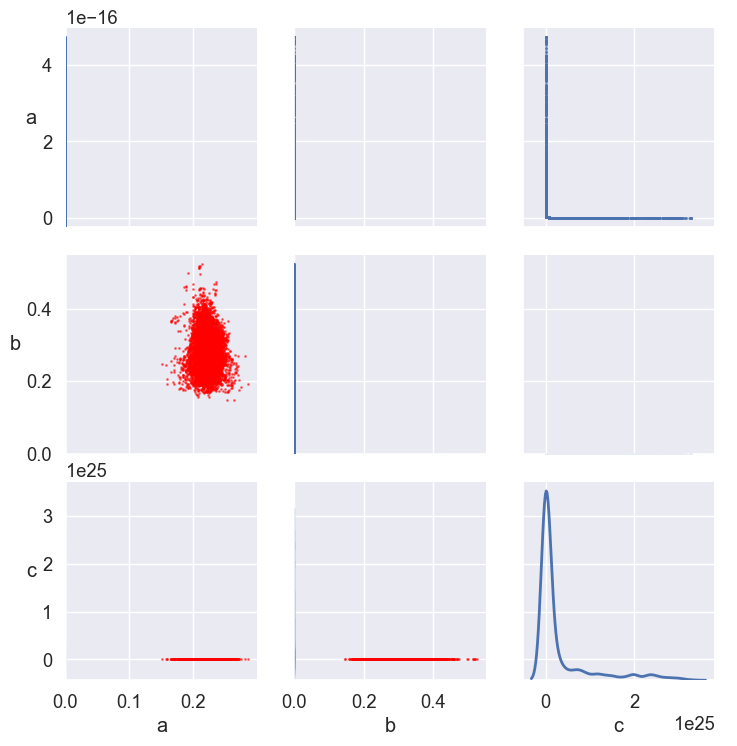}
\includegraphics[width=.49\textwidth,height=.4\textwidth]{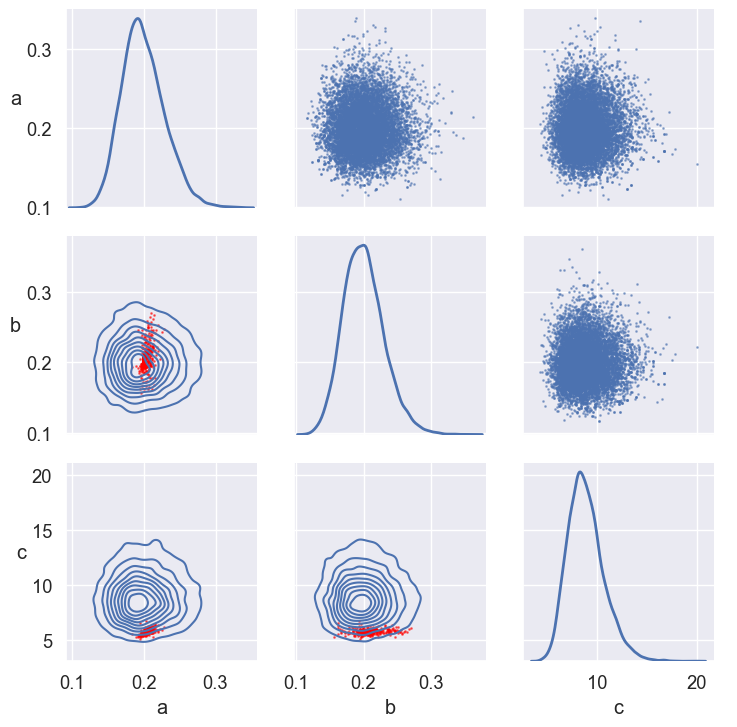}
\caption{R\"ossler inverse problem: marginal (diagonal) and pairwise (lower triangle) distributions estimated with 10000 samples (upper triangle) by the pCN algorithm based on NN emulators for the time-averaged model (left) and the STGP model (right) respectively. Red dots (lower triangle) are selective 10000 ensemble particles from running the EKS algorithm.}
\label{fig:rsl_UQ}
\end{figure}

\begin{table}[ht]\scriptsize
\centering
\begin{tabular}{l|lllll}
\toprule
Model-Algo &            J=50 &           J=100 &               J=200 &               J=500 &              J=1000 \\
\midrule
      Tavg-EKI & 0.16 (0.09) &     0.11 (0.06) &         0.10 (0.07) &         0.07 (0.04) &         0.11 (0.07) \\
      Tavg-EKS & 0.06 (0.02) & 0.06 (7.61e-03) &     0.06 (6.20e-03) &     0.06 (5.37e-03) &     0.06 (2.53e-03) \\
       STGP-EKI & 0.02 (0.02) &     0.01 (0.01) &         0.02 (0.02) &     0.01 (9.09e-03) &         0.01 (0.02) \\
       STGP-EKS & 0.02 (0.01) & 2.47e-03 (0.02) & 7.63e-04 (2.86e-03) & 4.23e-04 (2.45e-04) & 3.62e-04 (1.19e-04) \\
\bottomrule
\end{tabular}
\caption{R\"ossler inverse problem: comparing posterior estimates of parameter $u$ for two models (time-average and STGP) in terms of relative error of median $\mathrm{REM}=\frac{\Vert \hat u - u^\dagger \Vert}{\Vert u^\dagger \Vert}$. Each experiment is repeated for 10 runs of EnK (EKI and EKS respectively) and the numbers in the bracket are standard deviations of such repeated experiments.}
\label{tab:rsl_rem}
\end{table}

Now we fix $t_0=1000$ and $T=100$. 
Figure \ref{fig:rsl_rem} compares these two models \eqref{eq:time-average} \eqref{eq:STGP} in terms of REM's of the parameter estimation by EnK algorithms with different ensemble sizes ($J$).
The STGP model \eqref{eq:STGP} shows universal advantage over the time-averaged model \eqref{eq:time-average} in generating smaller REM's.
Note, the time-averaged model becomes over-fitting if running EKS more than 10 iterations, a phenomenon also reported in \cite{Iglesias_2013,Iglesias_2016}.
Table \ref{tab:rsl_rem} summarizes the REM's by different combinations of likelihood models and EnK algorithms and confirms the consistent advantage of the STGP model over the time-averaged model in rendering more accurate parameter estimation.

We apply CES (Section \ref{sec:BUQ}) \cite{cleary2020,lan2022} for the UQ. 
Based on the EKS ($J=500$) outputs, we build DNN $\mG^e: \mbR^3\to\mbR^9$ for the time-averaged model \eqref{eq:time-average} and DNN-RNN $\mG^e: \mbR^3\to\mbR^{3\times 100}$ for the STGP model \eqref{eq:STGP} to account for their different data structures.
Figure \ref{fig:rsl_UQ} compares the marginal and pairwise posterior densities of $u$ estimated by 10000 samples of the pCN algorithm based on the corresponding NN emulators for the two models. The STGP model \eqref{eq:STGP} generates more appropriate UQ results than the time-averaged model \eqref{eq:time-average} does.
Finally, we consider the forward prediction $\bar{\mG}(\bx, t_*)$ \eqref{eq:fwd_pred} for $t_*\in[t_0, t_0+1.5T]$ with $J=500$ EKS ensembles corresponding to the lowest error.
Figure \ref{fig:rsl_pred_comparelik} shows that the STGP model provides better prediction consistent with the truth throughout the whole time window while the result by the time-averaged model deviates from the truth quickly after $t=1020$.

\begin{figure}[t]
\includegraphics[width=1\textwidth,height=.3\textwidth]{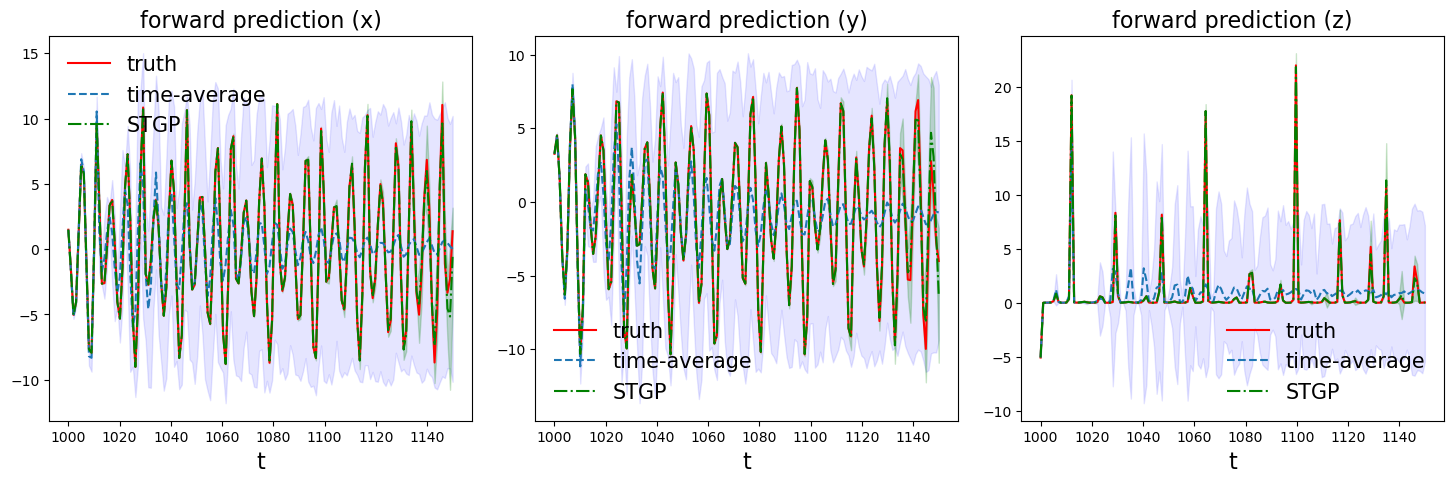}
\caption{R\"ossler inverse problem: comparing forward predictions $\bar{\mG}(\bx, t_*)$ based on the time-averaged model and the STGP model.}
\label{fig:rsl_pred_comparelik}
\end{figure}

\subsubsection{Chen system}\label{sec:Chen}
\begin{figure}[t]
\includegraphics[width=.66\textwidth,height=.35\textwidth]{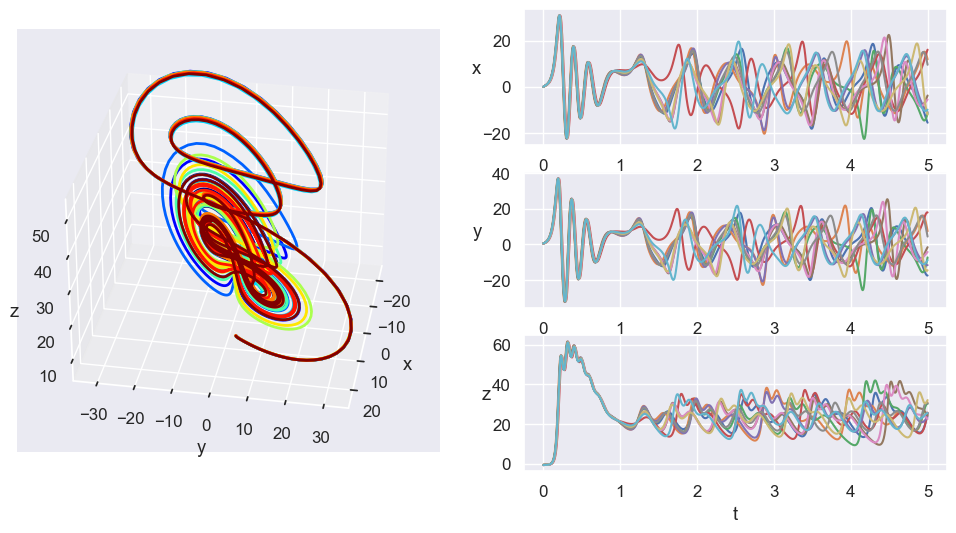}
\includegraphics[width=.33\textwidth,height=.35\textwidth]{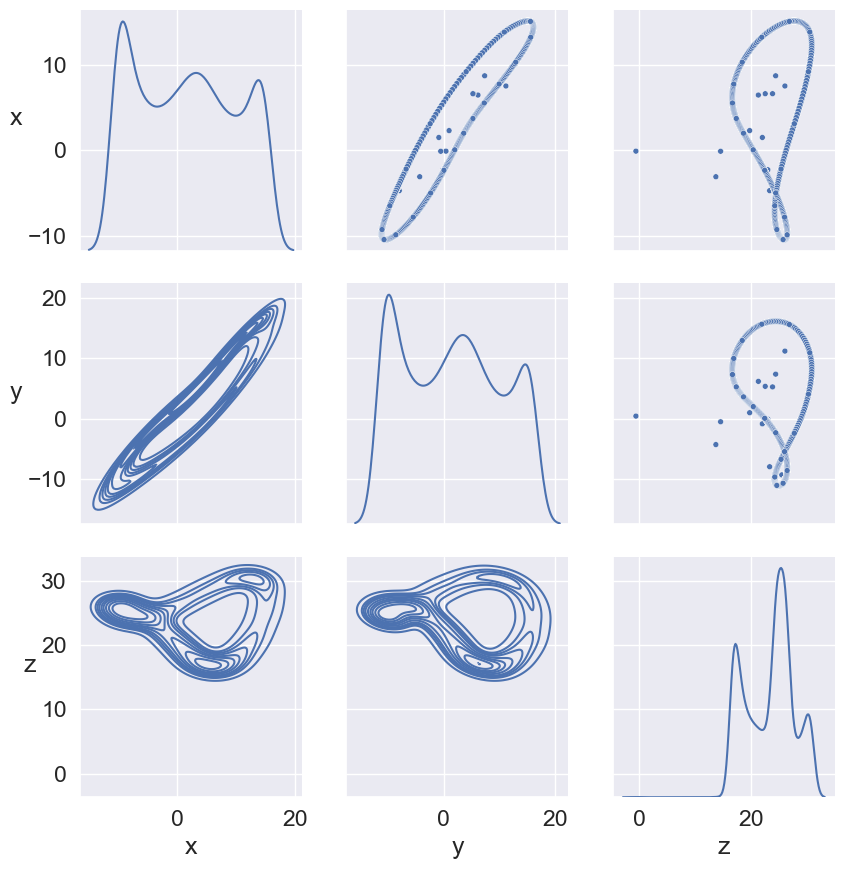}
\caption{Chen dynamics: double-scroll attractor (left), chaotic solutions (middle) and coordinates' distributions (right).}
\label{fig:chn_chaotic}
\end{figure}


Yet another chaotic dynamical system we consider is the Chen system \cite{chen1999yet} described by the following ODE:
\begin{equation}\label{eq:chen}
\begin{cases}
    \dot{x} &= a(y-x), \\
    \dot{y} &= (c-a)x-xz+cy, \\
    \dot{z} &= xy -bz. 
\end{cases}
\end{equation}
where $a, b, c>0$ are parameters.
When $a=35, b=3, c=28$, 
the system \eqref{eq:chen} has a double-scroll chaotic attractor often observed from a physical, electronic chaotic circuit. The true parameter that we will infer is $u^\dagger=(a^\dagger,b^\dagger,c^\dagger)=(35,3,28)$.
With $u^\dagger$, the system has three unstable equilibrium states given by $(0, 0, 0)$, $(\gamma, \gamma, 2c-a)$, and $(-\gamma, -\gamma, 2c-a)$ where $\gamma = \sqrt{b(2c-a)}$ \cite{yassen2003chaos}.
Figure \ref{fig:chn_chaotic} illustrates the two-scroll attractor (left), the chaotic trajectories (middle) and their marginal and pairwise distributions (right) of their coordinates viewed as random variables.

The Chen dynamics has trajectories changing rapidly as the Lorenz63 dynamics (compare the middle panels of Figure \ref{fig:chn_chaotic} and Figure \ref{fig:lrz_chaotic}). Therefore we adopt the same spin-up length ($t_0=100$) and observation window size ($T=10$) as in the Lorenz inverse problem (Section \ref{sec:Lorenz}). We generate the spatiotemporal data and the augmented time-averaged summary data by observing the trajectory of \eqref{eq:chen} over $[t_0, t_0+T]$ solved with $u^\dagger$ similarly as in the previous sections.
A log-Nomral prior is adopted for $u$: $\log u \sim \mN(\mu_0, \sigma_0^2)$ with $\mu_0=(3.5, 1.2, 3.3)$ and $\sigma_0=(0.35, 0.5, 0.15)$.
The STGP model \eqref{eq:STGP} still posses more convex posterior density $p(u)$ than the time-averaged model \eqref{eq:time-average} as illustrated by its marginal and pairwise sections plotted in Figure \ref{fig:chn_pairpdf}.

\begin{figure}[t]
\includegraphics[width=1\textwidth,height=.5\textwidth]{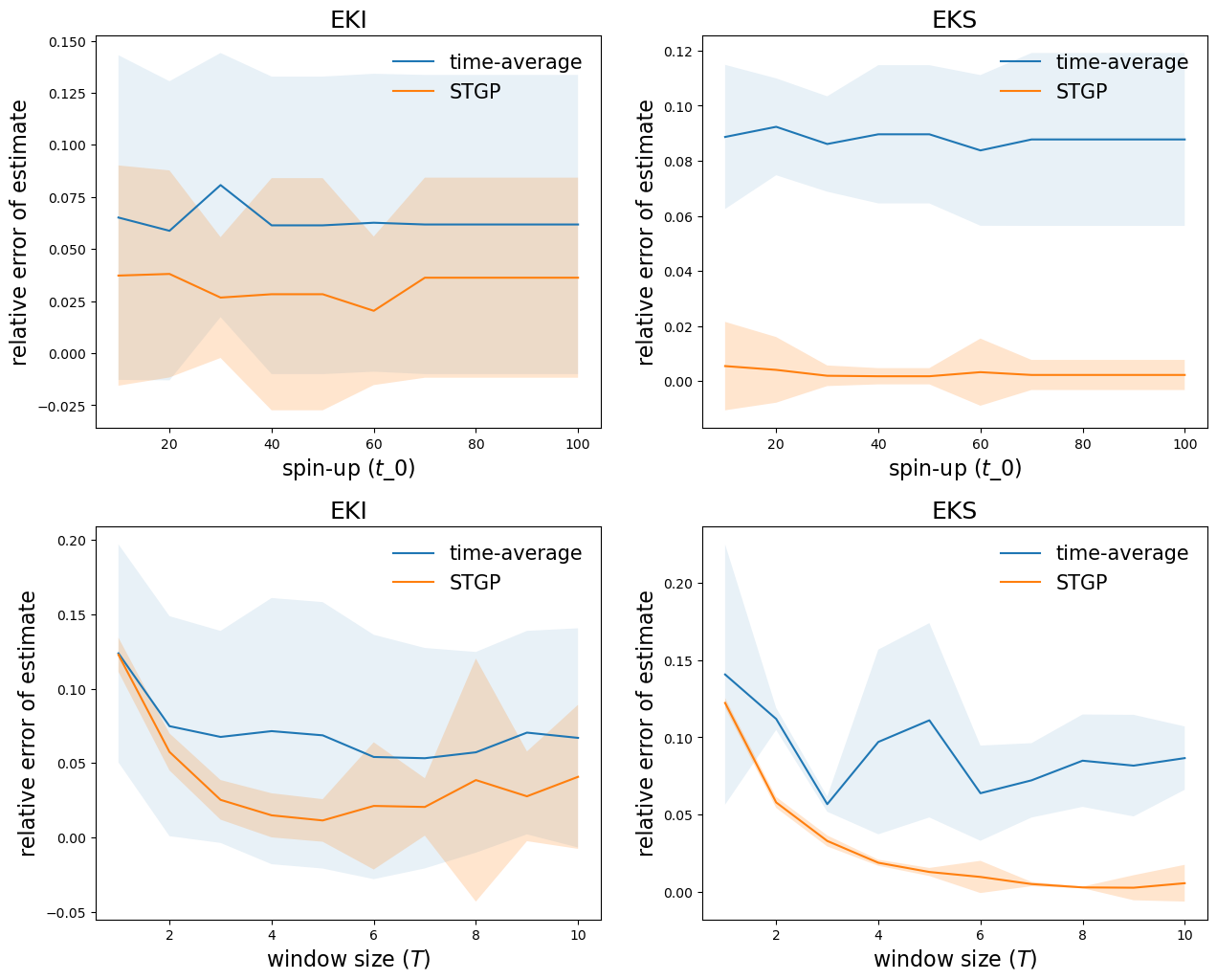}
\caption{Chen inverse problem: comparing posterior estimates of parameter $u$ for two models (time-average and STGP) in terms of relative error of mean $\mathrm{REM}=\frac{\Vert \hat u - u^\dagger \Vert}{\Vert u^\dagger \Vert}$. The upper row shows the results by varying the spin-up $t_0$ and fixing $T=10$. The lower row shows the results by varying the observation window size  $T$ and fixing $t_0=100$. Each experiment is repeated for 10 runs of EnK (EKI and EKS respectively) with $J=500$ ensembles and the shaded regions indicate standard deviations of such repeated experiments.}
\label{fig:chn_rem_spinavgs}
\end{figure}

Varying the spin-up length $t_0$ and the observation window size $T$ one at a time in Figure \ref{fig:chn_rem_spinavgs}, we observe similar advantage of the STGP model compared with the time-averaged model regardless of the insensitivity of errors with respect to $t_0$.
Similarly, the STGP model demands a smaller observation window than the time-averaged model ($T=2$ vs $T=6$ with EKI and $T=2$ vs $T=3$ with EKS) to reach the same level of accuracy.

\begin{table}[ht]\scriptsize
\centering
\begin{tabular}{l|lllll}
\toprule
Model-Algo &        J=50 &       J=100 &       J=200 &               J=500 &              J=1000 \\
\midrule
      Tavg-EKI & 0.07 (0.03) & 0.04 (0.04) & 0.04 (0.04) &         0.05 (0.04) &         0.04 (0.04) \\
      Tavg-EKS & 0.12 (0.03) & 0.10 (0.02) & 0.09 (0.02) &         0.09 (0.01) &         0.09 (0.01) \\
       STGP-EKI & 0.14 (0.09) & 0.09 (0.08) & 0.09 (0.08) &         0.03 (0.03) &     0.01 (9.87e-03) \\
       STGP-EKS & 0.07 (0.04) & 0.05 (0.04) & 0.01 (0.01) & 2.89e-03 (6.07e-03) & 3.32e-04 (4.66e-04) \\
\bottomrule
\end{tabular}
\caption{Chen inverse problem: comparing posterior estimates of parameter $u$ for two models (time-average and STGP) in terms of relative error of median $\mathrm{REM}=\frac{\Vert \hat u - u^\dagger \Vert}{\Vert u^\dagger \Vert}$. Each experiment is repeated for 10 runs of EnK (EKI and EKS respectively) and the numbers in the bracket are standard deviations of such repeated experiments.}
\label{tab:chn_rem}
\end{table}

\begin{figure}[htbp]
\includegraphics[width=.49\textwidth,height=.4\textwidth]{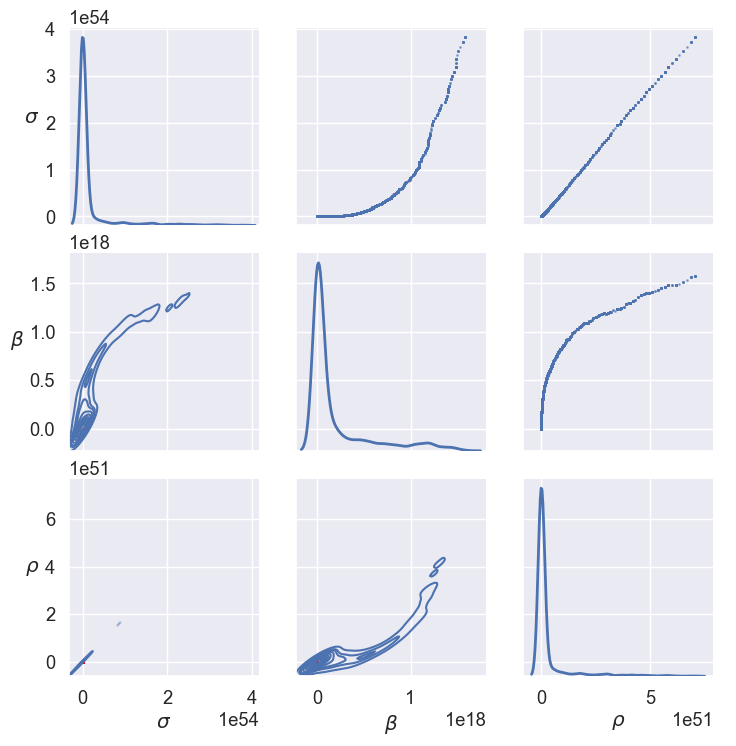}
\includegraphics[width=.49\textwidth,height=.4\textwidth]{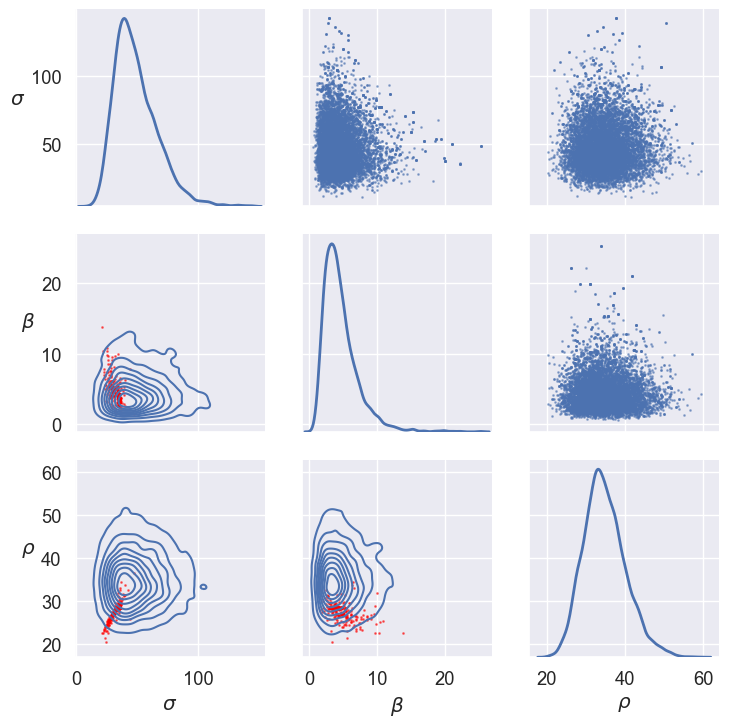}
\caption{Chen inverse problem: marginal (diagonal) and pairwise (lower triangle) distributions estimated with 10000 samples (upper triangle) by the pCN algorithm based on NN emulators for the time-averaged model (left) and the STGP model (right) respectively. Red dots (lower triangle) are selective 10000 ensemble particles from running the EKS algorithm.}
\label{fig:chn_UQ}
\end{figure}

Again we see the merit of the STGP model \eqref{eq:STGP} in reducing the error (REM) of parameter estimation compared with the time-averaged model \eqref{eq:time-average} in various combinations of EnK algorithms with different ensemble sizes ($J$) in Figure \ref{fig:chn_rem} and Table \ref{tab:chn_rem}. As in the previous problem (Section \ref{sec:Rossler}), similar over-fitting (bottom left of Figure \ref{fig:chn_rem}) by the time-averaged model occurs if running EKS algorithms more than 5 iterations (or earlier).

UQ results (Figure \ref{fig:chn_UQ}) by CES show the STGP model estimates the uncertainty of parameter $u$ more appropriately than the time-averaged model.
Finally, though the prediction is challenging to the Chen dynamics \eqref{eq:chen}, the STGP model still performs much better than the time-averaged model by predicting more accurate trajectory for longer time ($t=111$ vs $t=101$) as shown in Figure \ref{fig:chn_pred_comparelik}.

\begin{figure}[htbp]
\includegraphics[width=1\textwidth,height=.3\textwidth]{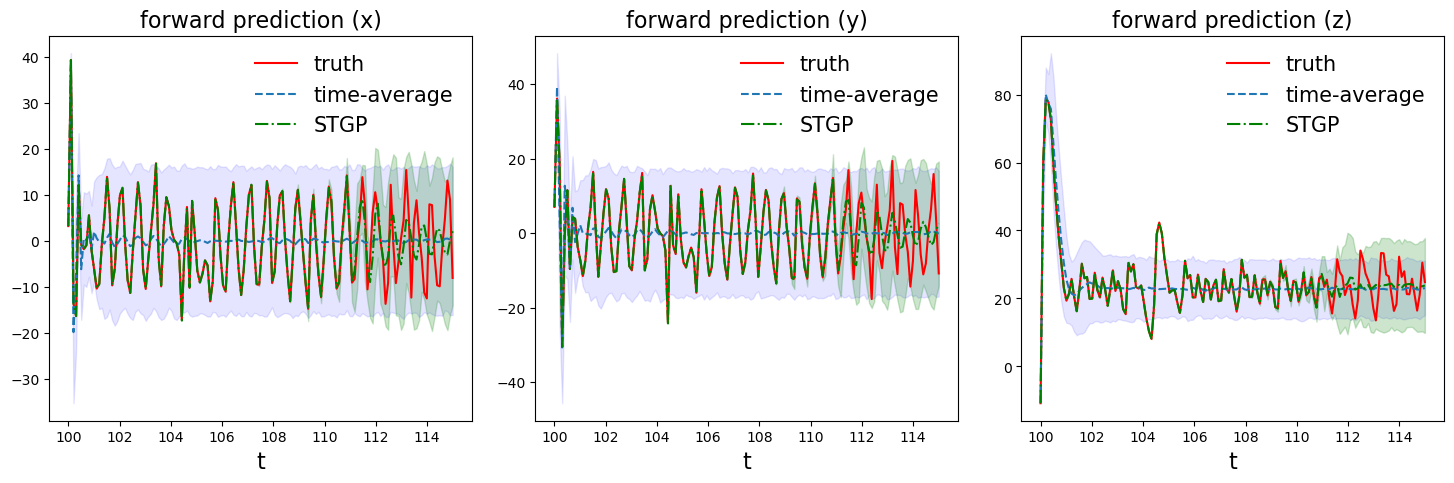}
\caption{Chen inverse problem: comparing forward predictions $\bar{\mG}(\bx, t_*)$ based on the time-averaged model and the STGP model.}
\label{fig:chn_pred_comparelik}
\end{figure}

\section{Conclusion}\label{sec:conclusion}

In this paper, we investigate the inverse problems with spatiotemporal data. We compare the Bayesian models based on STGP with traditional static and time-averaged models that do not fully integrate the spatiotemporal information. By fitting the trajectories of the observed data, the STGP model provides more effective parameter estimation and more appropriate UQ. We explain the superiority of the STGP model in theorems showing that it renders more convex likelihood that facilitates the parameter learning. We demonstrate the advantage of the spatiotemporal modeling using an inverse problem constrained by an advection-diffusion PDE and three inverse problems involving chaotic dynamics.

Theorems \ref{thm:convexity1} and \ref{thm:convexity2} compare the STGP model with the static and the time-averaged models regarding their statistical convexity. These novel qualitative results imply that the parameter learning (based on EnK methods) with the STGP model converges faster than the other two traditional methods. In the future work, we will explore a quantitative characterization on their convergence rates particularly in terms of covariance properties.

The STGP model \eqref{eq:STGP} considered in this paper has a classical separation structure in their joint kernel. This may not be sufficient to characterize complex spatiotemporal relationships, e.g. the temporal evoluation of spatial dependence (TESD) \cite{Lan_2019}. We will expand this work by considering non-stationary non-separable STGP models \cite{cressie2011,Zhang_2020,Wang_2020} to account for more complicated space-time interactions in these spatiotemporal inverse problems.

\section*{Acknowledgement}
SL is supported by NSF grant DMS-2134256.

\clearpage
\section*{Appendix}
\appendix
\newtheorem*{theorem*}{Theorem}

\section{Proofs} \label{apx:proof_convexity}

\begin{theorem*}[\ref{thm:convexity1}]
If we set the maximal eigenvalues of $\bC_\bx$ and $\bC_t$ such that $\lambda_{\max}(\bC_\bx)\lambda_{\max}(\bC_t)\leq \sigma^2_\eps$, then the following inequality holds regarding the Fisher information matrices, $\mI_\textrm{\tiny S}$ and $\mI_\textrm{\tiny ST}$, of the static model and the STGP model respectively:
\begin{equation}
    \mI_\textrm{\tiny ST}(u) \geq \mI_\textrm{\tiny S}(u)
\end{equation}
If we control the maximal eigenvalues of $\bC_\bx$ and $\bC_t$ such that $\lambda_{\max}(\bC_\bx)\lambda_{\max}(\bC_t)\leq J\lambda_{\min}(\Gamma_\textrm{obs})$, then the following inequality holds regarding the Fisher information matrices, $\mI_\textrm{\tiny T}$ and $\mI_\textrm{\tiny ST}$, of the time-averaged model and the STGP model respectively:
\begin{equation}
    \mI_\textrm{\tiny ST}(u) \geq \mI_\textrm{\tiny T}(u)
\end{equation}
\end{theorem*}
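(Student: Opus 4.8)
\emph{Proof proposal.} The plan is to reduce both inequalities to an elementary eigenvalue comparison of the three Kronecker-structured precision matrices, exploiting the fact that the three likelihoods in \eqref{eq:MatN} carry the \emph{same} parameter-dependent mean $\mG(u)(\bX,\bt)$ and differ only through their (fixed) covariances $\Gamma_*=\bV_*\otimes\bU_*$.

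First I would record the Fisher-information identity in a common form. Writing $\bg(u):=\VEC(\mG(u)(\bX,\bt))$ and $\bG(u):=\nabla_u\bg(u)$ for its Jacobian, each model $\bY\,|\,u\sim\mMN(\mG(u)(\bX,\bt),\bU_*,\bV_*)$ is equivalent to $\VEC(\bY)\sim\mN(\bg(u),\Gamma_*)$, where for the time-averaged model the matrices are read through the Moore--Penrose pseudo-inverse; in particular $(\bm{1}_J\tp{\bm{1}}_J)^-=J^{-2}\bm{1}_J\tp{\bm{1}}_J$, so $\bV_\textrm{\tiny T}^{-1}=J^{-2}\bm{1}_J\tp{\bm{1}}_J$ and the induced potential is exactly the time-averaged potential $\Phi_\textrm{\tiny T}(u)$ of Section~\ref{sec:time-average}. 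Since $\Gamma_*$ does not depend on $u$, the standard Gaussian formula gives
\begin{equation}
\mI_*(u)=\tp{\bG(u)}\,\Gamma_*^{-1}\,\bG(u),\qquad *\in\{\mathrm S,\mathrm T,\mathrm{ST}\}
\end{equation}
(the score is $\tp{\bG(u)}\Gamma_*^{-1}(\VEC(\bY)-\bg(u))$, whose covariance is $\Gamma_*$; equivalently, the $\nabla_u^2\mG$ term in the Hessian of $\Phi_*$ drops out in expectation because the residual has mean zero). As $\bG(u)$ is common to the three models, it suffices to prove the operator inequalities $\Gamma_\textrm{\tiny ST}^{-1}\succeq\Gamma_\textrm{\tiny S}^{-1}$ and $\Gamma_\textrm{\tiny ST}^{-1}\succeq\Gamma_\textrm{\tiny T}^{-1}$, since conjugation by $\bG(u)$ preserves the Loewner order.

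For the eigenvalue bookkeeping I would use only that, for positive semidefinite $A,B$, $\lambda_{\min}(A\otimes B)=\lambda_{\min}(A)\lambda_{\min}(B)$ and $\lambda_{\max}(A\otimes B)=\lambda_{\max}(A)\lambda_{\max}(B)$, that $\lambda_{\min}(C^{-1})=\lambda_{\max}(C)^{-1}$, and that $\lambda_{\min}(M)\bI\preceq M\preceq\lambda_{\max}(M)\bI$. From \eqref{eq:MatN}, $\Gamma_\textrm{\tiny S}^{-1}=\sigma_\eps^{-2}\bI$, $\Gamma_\textrm{\tiny ST}^{-1}=\bC_t^{-1}\otimes\bC_\bx^{-1}$ with $\lambda_{\min}(\Gamma_\textrm{\tiny ST}^{-1})=\lambda_{\max}(\bC_t)^{-1}\lambda_{\max}(\bC_\bx)^{-1}$, and $\Gamma_\textrm{\tiny T}^{-1}=(J^{-2}\bm{1}_J\tp{\bm{1}}_J)\otimes\Gamma_\textrm{obs}^{-1}$ with $\lambda_{\max}(\Gamma_\textrm{\tiny T}^{-1})=(J\lambda_{\min}(\Gamma_\textrm{obs}))^{-1}$ (since $\bm{1}_J\tp{\bm{1}}_J$ has eigenvalues $J$ and $0$). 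Hence, under $\lambda_{\max}(\bC_\bx)\lambda_{\max}(\bC_t)\le\sigma_\eps^2$,
\begin{equation}
\Gamma_\textrm{\tiny ST}^{-1}\succeq\lambda_{\min}(\Gamma_\textrm{\tiny ST}^{-1})\,\bI\succeq\sigma_\eps^{-2}\bI=\Gamma_\textrm{\tiny S}^{-1},
\end{equation}
and under $\lambda_{\max}(\bC_\bx)\lambda_{\max}(\bC_t)\le J\lambda_{\min}(\Gamma_\textrm{obs})$,
\begin{equation}
\Gamma_\textrm{\tiny ST}^{-1}\succeq\lambda_{\min}(\Gamma_\textrm{\tiny ST}^{-1})\,\bI\succeq\lambda_{\max}(\Gamma_\textrm{\tiny T}^{-1})\,\bI\succeq\Gamma_\textrm{\tiny T}^{-1},
\end{equation}
each middle step being the stated hypothesis rearranged. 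Conjugating by $\bG(u)$ then yields $\mI_\textrm{\tiny ST}(u)\ge\mI_\textrm{\tiny S}(u)$ and $\mI_\textrm{\tiny ST}(u)\ge\mI_\textrm{\tiny T}(u)$.

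The main obstacle is the rank-deficient column covariance $\bV_\textrm{\tiny T}$ of the time-averaged model: one must check carefully that interpreting its inverse as a pseudo-inverse makes the matrix-normal potential coincide with $\Phi_\textrm{\tiny T}(u)$ and that the identity $\mI_\textrm{\tiny T}(u)=\tp{\bG}\Gamma_\textrm{\tiny T}^{-1}\bG$ survives this degeneracy (the residual lies in the range of $\bV_\textrm{\tiny T}^{-1}$ and has mean zero there, so the second-derivative term still vanishes). Everything past that is the short computation above; note no genuine convexity of $\Phi_*(u)$ in $u$ is needed, since the statement compares only the Gram-type quadratic forms $\tp{\bG}\Gamma_*^{-1}\bG$ at each fixed $u$.
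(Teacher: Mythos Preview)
Your proposal is correct and follows essentially the same route as the paper: both reduce the Fisher-information comparison to the Loewner inequality $(\bV_\textrm{\tiny ST}\otimes\bU_\textrm{\tiny ST})^{-1}\succeq(\bV_*\otimes\bU_*)^{-1}$ for $*\in\{\mathrm S,\mathrm T\}$, which is then established by bounding the smallest eigenvalue of the left side below by the largest eigenvalue of the right side using the Kronecker eigenvalue identity $\lambda_k(A\otimes B)=\lambda_i(A)\lambda_j(B)$. Your vectorized formulation $\mI_*(u)=\tp{\bG(u)}\,\Gamma_*^{-1}\,\bG(u)$ is a slightly more direct packaging of the paper's explicit Hessian-of-$\Phi_*$ computation and its subsequent expansion against a test vector $\tilde\bw=\sum_i w_i\VEC(\partial_{u_i}\bY_0)$, but the substance is identical.
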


\begin{proof}
Denote $\bY_0=\bY-\bM$. We have $\Phi_*(u)=\half \tr\left[\bV_*^{-1} \tp{\bY}_0 \bU_*^{-1} \bY_0\right]$ with $*$ being S or ST. 
$\bU_\textrm{\tiny S}$, $\bV_\textrm{\tiny S}$, $\bU_\textrm{\tiny ST}$ and $\bV_\textrm{\tiny ST}$ are specified in \eqref{eq:MatN}.
We notice that both $\bU_*$ and $\bV_*$ are symmetric, then we have
\begin{equation}
\begin{aligned}
\frac{\pa \Phi_*}{\pa u_i} &= \half\left\{\tr\left[\bV_*^{-1} \frac{\pa \tp{\bY}_0}{\pa u_i} \bU_*^{-1}\bY_0 \right] + \tr\left[\bV_*^{-1} \tp{\bY}_0 \bU_*^{-1}\frac{\pa \bY_0}{\pa u_i} \right] \right\} = \tr\left[\bV_*^{-1} \tp{\bY}_0 \bU_*^{-1}\frac{\pa \bY_0}{\pa u_i} \right] \\
\frac{\pa^2 \Phi_*}{\pa u_i \pa u_j} &= \tr\left[\bV_*^{-1} \tp{\bY}_0 \bU_*^{-1}\frac{\pa^2 \bY_0}{\pa u_i \pa u_j} \right] + \tr\left[\bV_*^{-1} \frac{\pa \tp{\bY}_0}{\pa u_i} \bU_*^{-1}\frac{\pa \bY_0}{\pa u_j} \right] 
\end{aligned}
\end{equation}

Due to the i.i.d. assumption in both models, $\bY_0$ is independent of either $\frac{\pa \bY_0}{\pa u_i}$ or $\frac{\pa^2 \bY_0}{\pa u_i \pa u_j}$. Therefore
\begin{equation}
\begin{aligned}
    (\mI_*)_{ij} &= \E\left[ \frac{\pa^2 \Phi_*}{\pa u_i \pa u_j} \right] = \E \left[ \tr\left(\bV_*^{-1} \frac{\pa \tp{\bY}_0}{\pa u_i} \bU_*^{-1}\frac{\pa \bY_0}{\pa u_j} \right) \right] \\
    &= \E\left[ \VEC\tp{\left(\frac{\pa \bY_0}{\pa u_i}\right)} (\bV_*^{-1} \otimes \bU_*^{-1}) \VEC\left(\frac{\pa \bY_0}{\pa u_j}\right) \right]
\end{aligned}
\end{equation}
For any $\bw=(w_1,\cdots,w_p)\in \mbR^p$ and $\bw\neq \bzero$, denote $\tilde\bw := \sum_{i=1}^p w_i \VEC\left(\frac{\pa \bY_0}{\pa u_i}\right)$.
To prove $\mI_\textrm{\tiny ST}(u) \geq \mI_\textrm{\tiny S}(u)$, it suffices to show $\tp{\tilde \bw}(\bV_\textrm{\tiny ST} \otimes \bU_\textrm{\tiny ST})^{-1} \tilde \bw \geq \tp{\tilde \bw}(\bV_\textrm{\tiny S} \otimes \bU_\textrm{\tiny S})^{-1} \tilde \bw$.

By \cite[Theorem 4.2.12 in][]{Horn_1991}, we know that any eigenvalue of $\bV_* \otimes \bU_*$ has the format as a product of eigenvalues of $\bV_*$ and $\bU_*$ respectively, i.e. $\lambda_k(\bV_* \otimes \bU_*) = \lambda_i(\bV_*)\lambda_j(\bU_*)$, where where $\{\lambda_j(M)\}$ are the ordered eigenvalues of $M$, i.e. $\lambda_1(M)\geq\cdots\geq \lambda_d(M)$.
By the given condition we have
\begin{equation}
    \lambda_{IJ}((\bV_\textrm{\tiny ST} \otimes \bU_\textrm{\tiny ST})^{-1}) = \lambda_1^{-1}(\bV_\textrm{\tiny ST} \otimes \bU_\textrm{\tiny ST}) = \lambda_1^{-1}(\bC_t) \lambda_1^{-1}(\bC_\bx) \geq \sigma^{-2}_\eps = \lambda_1((\bV_\textrm{\tiny S} \otimes \bU_\textrm{\tiny S})^{-1})
\end{equation}
Thus it completes the proof of the first inequality.

Similarly by the second condition, we have
\begin{equation}
    \lambda_{IJ}((\bV_\textrm{\tiny ST} \otimes \bU_\textrm{\tiny ST})^{-1}) = \lambda_1^{-1}(\bC_t) \lambda_1^{-1}(\bC_\bx) \geq J^{-1} \lambda_{\min}^{-1} (\Gamma_\textrm{obs}) = \lambda_1(\bV_\textrm{\tiny S}^- \otimes \bU_\textrm{\tiny S}^{-1})
\end{equation}
and complete the proof of the second inequality.
\end{proof}

\begin{theorem*}[\ref{thm:convexity2}]
If we choose $\bC_\bx=\Gamma_\textrm{obs}$ and require the maximal eigenvalue of $\bC_t$, $\lambda_{\max}(\bC_t)\leq J$, then the following inequality holds regarding the Fisher information matrices, $\mI_\textrm{\tiny T}$ and $\mI_\textrm{\tiny ST}$, of the time-averaged model and the STGP model respectively:
\begin{equation}
    \mI_\textrm{\tiny ST}(u) \geq \mI_\textrm{\tiny T}(u)
\end{equation}
\end{theorem*}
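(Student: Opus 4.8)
The plan is to reuse the Fisher-information identity already established in the proof of Theorem~\ref{thm:convexity1}, namely that for $*\in\{\textrm{T},\textrm{ST}\}$
\begin{equation*}
(\mI_*)_{ij} = \E\left[ \VEC\tp{\left(\frac{\pa \bY_0}{\pa u_i}\right)} \left(\bV_*^{-1} \otimes \bU_*^{-1}\right) \VEC\left(\frac{\pa \bY_0}{\pa u_j}\right) \right],
\end{equation*}
where, since $\bV_\textrm{\tiny T}$ is rank one, $\bV_\textrm{\tiny T}^{-1}$ is read as the Moore--Penrose pseudo-inverse $\bV_\textrm{\tiny T}^-$, and where $\bY_0=\bY-\bM$ is the \emph{same} residual matrix in both models. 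Consequently, for any $\bw\in\mbR^p$, writing $\tilde\bw:=\sum_i w_i\VEC(\pa\bY_0/\pa u_i)$, we have $\tp{\bw}(\mI_\textrm{\tiny ST}-\mI_\textrm{\tiny T})\bw=\E\big[\tp{\tilde\bw}\big(\bV_\textrm{\tiny ST}^{-1}\otimes\bU_\textrm{\tiny ST}^{-1}-\bV_\textrm{\tiny T}^-\otimes\bU_\textrm{\tiny T}^{-1}\big)\tilde\bw\big]$, so it suffices to prove the Loewner-order inequality $\bV_\textrm{\tiny ST}^{-1}\otimes\bU_\textrm{\tiny ST}^{-1}\succeq\bV_\textrm{\tiny T}^-\otimes\bU_\textrm{\tiny T}^{-1}$ among the $IJ\times IJ$ matrices; taking expectations then delivers the claim.

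Next I would exploit the hypothesis $\bC_\bx=\Gamma_\textrm{obs}$, which makes the two spatial (row) factors coincide: $\bU_\textrm{\tiny ST}=\bC_\bx=\Gamma_\textrm{obs}=\bU_\textrm{\tiny T}=:\bU$. The target inequality then factors as $(\bV_\textrm{\tiny ST}^{-1}-\bV_\textrm{\tiny T}^-)\otimes\bU^{-1}\succeq0$. Because $\bU=\Gamma_\textrm{obs}\succ0$ (hence $\bU^{-1}\succ0$) and the Kronecker product of two positive semidefinite matrices is positive semidefinite, this reduces to the $J\times J$ statement $\bC_t^{-1}=\bV_\textrm{\tiny ST}^{-1}\succeq\bV_\textrm{\tiny T}^-$. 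From $\bV_\textrm{\tiny T}=J^2(\bm{1}_J\tp{\bm{1}}_J)^-$ one computes $\bV_\textrm{\tiny T}^-=J^{-2}\bm{1}_J\tp{\bm{1}}_J=J^{-1}P$, where $P:=J^{-1}\bm{1}_J\tp{\bm{1}}_J$ is the orthogonal projection onto $\mathrm{span}(\bm{1}_J)$.

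It then remains to use the bound on $\bC_t$: the condition $\lambda_{\max}(\bC_t)\leq J$ is equivalent to $\bC_t\preceq J\bI_t$, and inverting the order of positive definite matrices gives $\bC_t^{-1}\succeq J^{-1}\bI_t$; since a projection satisfies $\bI_t\succeq P$, we conclude $\bC_t^{-1}\succeq J^{-1}\bI_t\succeq J^{-1}P=\bV_\textrm{\tiny T}^-$, which closes the argument. I do not expect a genuine obstacle here; the only place demanding care is the bookkeeping forced by the degenerate column covariance $\bV_\textrm{\tiny T}$ — verifying that the Fisher-information identity of Theorem~\ref{thm:convexity1} survives the substitution of the pseudo-inverse, confirming the closed form of $\bV_\textrm{\tiny T}^-$, and noting that the needed Loewner comparison holds for the \emph{full} matrices rather than merely on the range of $P$ (which is automatic, since $J^{-1}\bI_t$ already dominates $J^{-1}P$).
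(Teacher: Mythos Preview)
Your proof is correct and reaches the same conclusion, but the route differs from the paper's in a meaningful way. Both arguments begin with the Fisher-information identity from Theorem~\ref{thm:convexity1} and use the hypothesis $\bC_\bx=\Gamma_\textrm{obs}$ to make the spatial factors $\bU_\textrm{\tiny ST}$ and $\bU_\textrm{\tiny T}$ coincide, reducing the problem to a comparison in the temporal factor alone. From there the approaches diverge. The paper defines $\bW:=\sum_{i,j} w_i\,\E\!\big(\tfrac{\pa \tp{\bY}_0}{\pa u_i}\bU^{-1}\tfrac{\pa \bY_0}{\pa u_j}\big)w_j\succeq 0$ and shows the \emph{trace} inequality $\tr[\bV_\textrm{\tiny T}^-\bW]\leq\tr[\bV_\textrm{\tiny ST}^{-1}\bW]$ by sandwiching both sides with the Von~Neumann trace inequality and its corollary, exploiting that $\bV_\textrm{\tiny T}^-$ has the single nonzero eigenvalue $J^{-1}$ while $\lambda_J(\bC_t^{-1})\geq J^{-1}$. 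You instead prove the stronger \emph{Loewner} inequality $\bC_t^{-1}\succeq J^{-1}\bI_t\succeq J^{-1}P=\bV_\textrm{\tiny T}^-$ directly, using only that $\lambda_{\max}(\bC_t)\leq J$ inverts to $\bC_t^{-1}\succeq J^{-1}\bI_t$ and that a projection is dominated by the identity. Your operator inequality immediately implies the paper's trace inequality (since $\tr[(A-B)W]=\tr[W^{1/2}(A-B)W^{1/2}]\geq 0$ when $A\succeq B$ and $W\succeq 0$), so your argument is more elementary and yields a slightly stronger intermediate statement; the paper's trace-inequality machinery is more flexible in principle but that flexibility is not needed here.
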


\begin{proof}
Denote $\bY_0=\bY-\bM$. We have $\Phi_*(u)=\half \tr\left[\bV_*^{-1} \tp{\bY}_0 \bU_*^{-1} \bY_0\right]$ with $*$ being T or ST. 
$\bU_\textrm{\tiny T}$, $\bV_\textrm{\tiny T}$, $\bU_\textrm{\tiny ST}$ and $\bV_\textrm{\tiny ST}$ are specified in \eqref{eq:MatN}.

By the similar argument of the proof in Theorem \ref{thm:convexity1}, we have
\begin{equation}
    (\mI_*)_{ij} = \E\left[ \frac{\pa^2 \Phi_*}{\pa u_i \pa u_j} \right] = \tr\left[\bV_*^{-1} \E\left(\frac{\pa \tp{\bY}_0}{\pa u_i} \bU_*^{-1}\frac{\pa \bY_0}{\pa u_j} \right)\right]
\end{equation}
For any $\bw=(w_1,\cdots,w_p)\in \mbR^p$ and $\bw\neq \bzero$, denote $\bW := \sum_{i,j=1}^p w_i \E\left(\frac{\pa \tp{\bY}_0}{\pa u_i} \bU_*^{-1}\frac{\pa \bY_0}{\pa u_j} \right) w_j$.
We know $\bW\geq \bzero_{J\times J}$.
It suffices to show $\tr[\bV_\textrm{\tiny ST}^{-1}\bW]\geq \tr[\bV_\textrm{\tiny T}^{-1}\bW]$.

By the corollary \cite{Marshall_2011} of Von Neumann's trace inequality \cite{Mirsky1975}, we have
\begin{equation}
    \sum_{j=1}^J\lambda_j(\bV_*^{-1})\lambda_{J-j+1}(\bW) \leq \tr(\bV_*^{-1}\bW) \leq \sum_{j=1}^J\lambda_j(\bV_*^{-1})\lambda_j(\bW)
\end{equation}
where $\{\lambda_j(M)\}$ are the ordered eigenvalues of $M$, i.e. $\lambda_1(M)\geq\cdots\geq \lambda_d(M)$.
The only non-zero eigenvalue of $\bV_{\tiny T}^-=J^{-2} (\bm{1}_J \tp{\bm{1}}_J)$ is $\lambda_1(\bV_{\tiny T}^-)=J^{-1}$. Therefore, we have
\begin{equation}
    \tr[\bV_{\tiny T}^-\bW] \leq J^{-1}\lambda_1(\bW) \leq \lambda_J(\bV_\textrm{\tiny ST}^{-1}) \lambda_1(\bW) + \sum_{j=1}^{J-1}\lambda_j(\bV_\textrm{\tiny ST}^{-1})\lambda_{J-j+1}(\bW) \leq \tr[\bV_\textrm{\tiny ST}^{-1}\bW]
\end{equation}
where $\lambda_J(\bV_\textrm{\tiny ST}^{-1}) = \lambda_1^{-1}(\bC_t)\geq J^{-1}$ and $\lambda_j(\bV_\textrm{\tiny ST}^{-1}), \lambda_j(\bW)\geq 0$.
\end{proof}

\section{More Numerical Results}

\begin{figure}[t]
\includegraphics[width=1\textwidth,height=.4\textwidth]{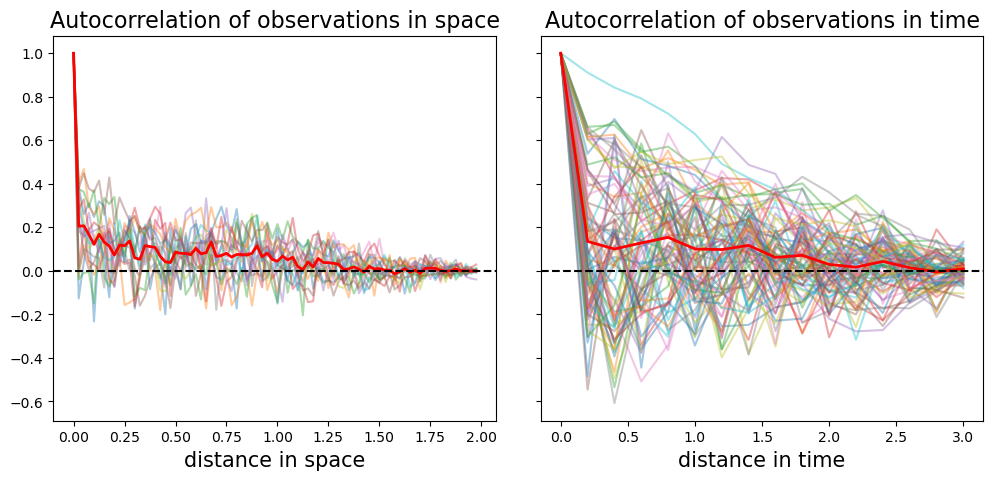}
\caption{Advection-diffusion inverse problem: auto-correlations of observations in space (left) and time (right) respectively.}
\label{fig:adif_obs_acf}
\end{figure}

\begin{figure}[t]
\includegraphics[width=1\textwidth,height=.5\textwidth]{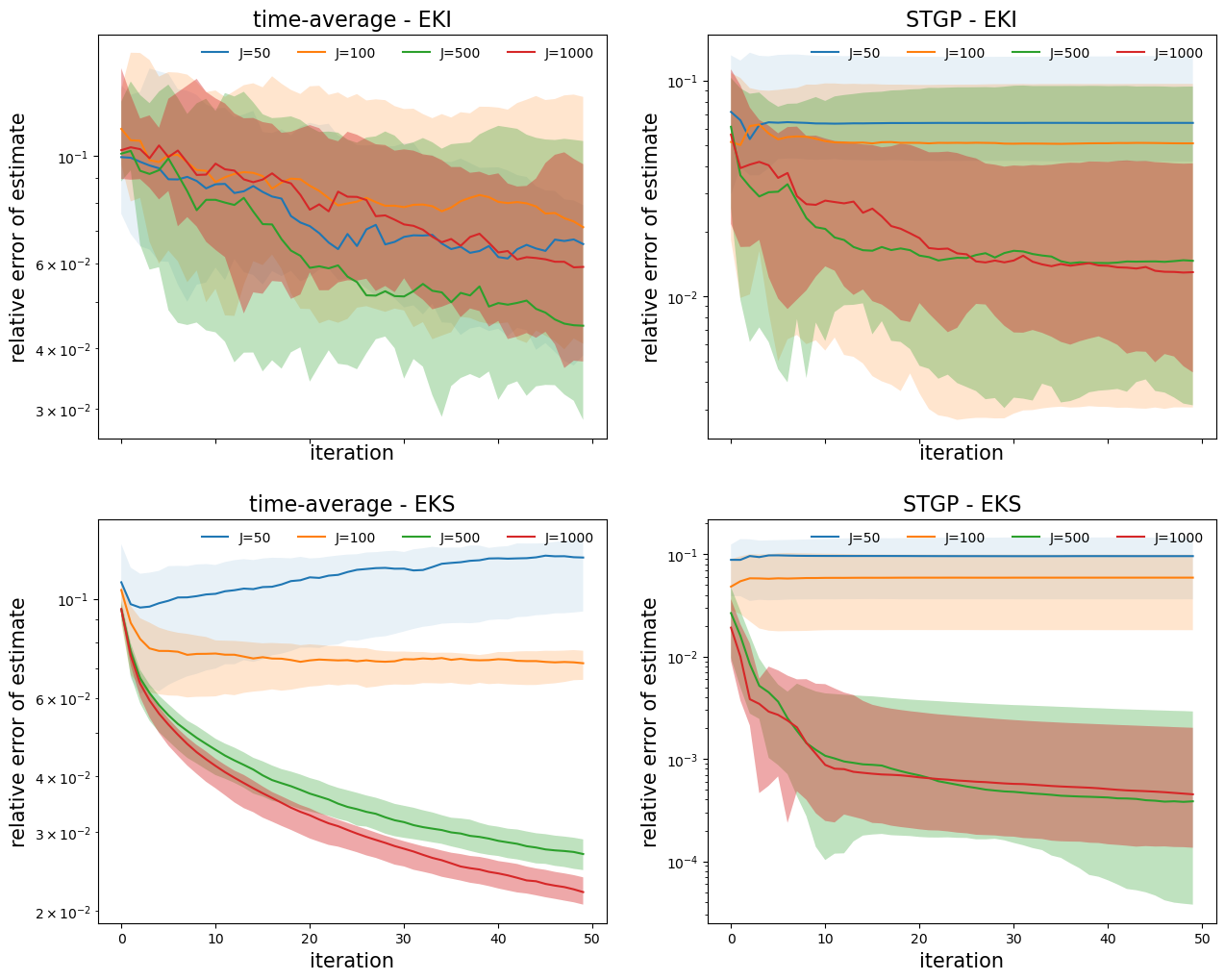}
\caption{Lorenz inverse problem: comparing posterior estimates of parameter $u$ for two models (time-average and STGP) in terms of relative error of median $\mathrm{REM}=\frac{\Vert \hat u - u^\dagger \Vert}{\Vert u^\dagger \Vert}$. Each experiment is repeated for 10 runs of EnK (EKI and EKS respectively) and shaded regions indicate $5\sim 95\%$ quantiles of such repeated results.}
\label{fig:lrz_rem}
\end{figure}


\begin{figure}[t]
\includegraphics[width=.49\textwidth,height=.4\textwidth]{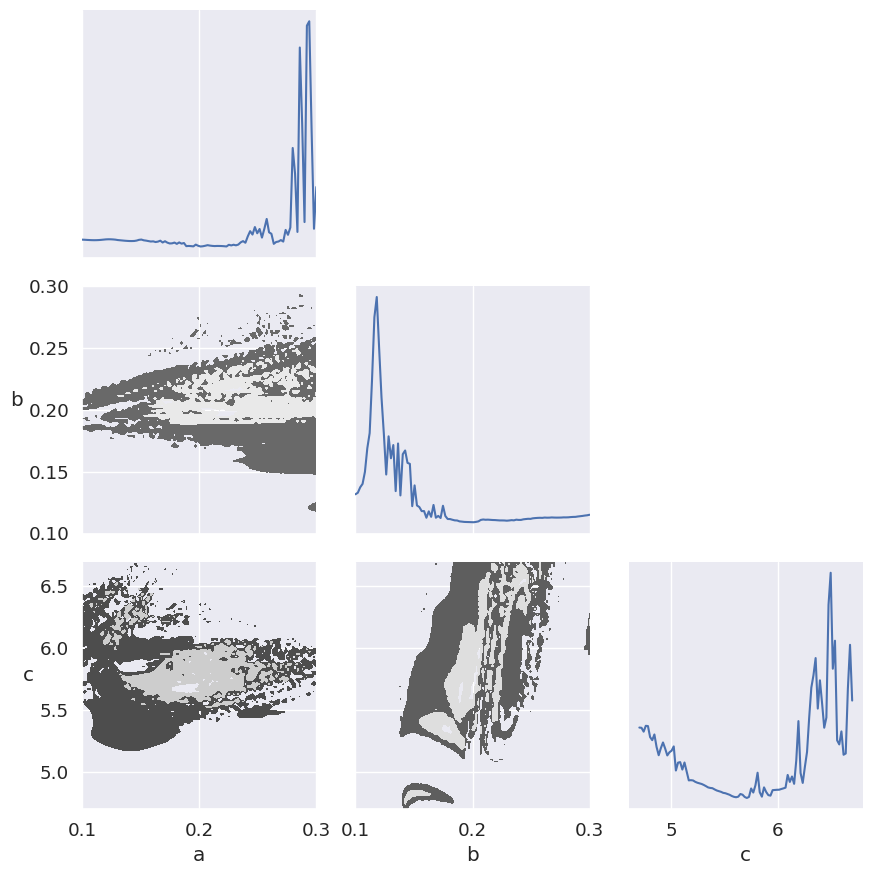}
\includegraphics[width=.49\textwidth,height=.4\textwidth]{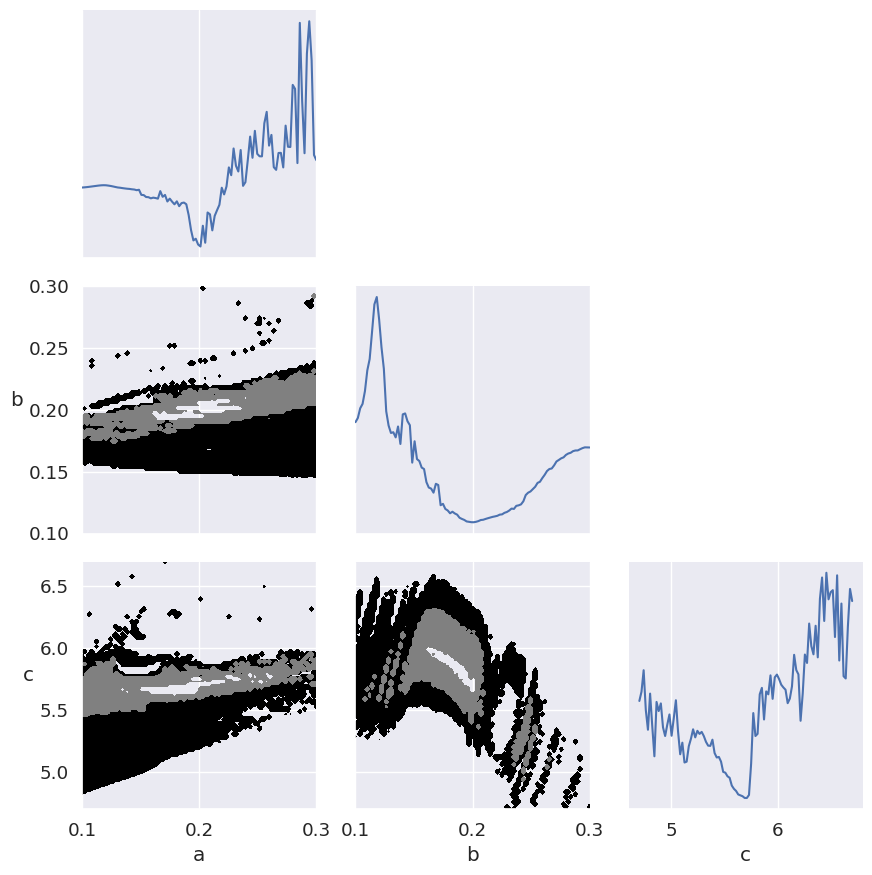}
\caption{R\"ossler inverse problem: marginal (diagonal) and pairwise (lower triangle) sections of the joint density $p(u)$ by the time-averaged model (left) and the STGP model (right) respectively.}
\label{fig:rsl_pairpdf}
\end{figure}

\begin{figure}[t]
\includegraphics[width=1\textwidth,height=.5\textwidth]{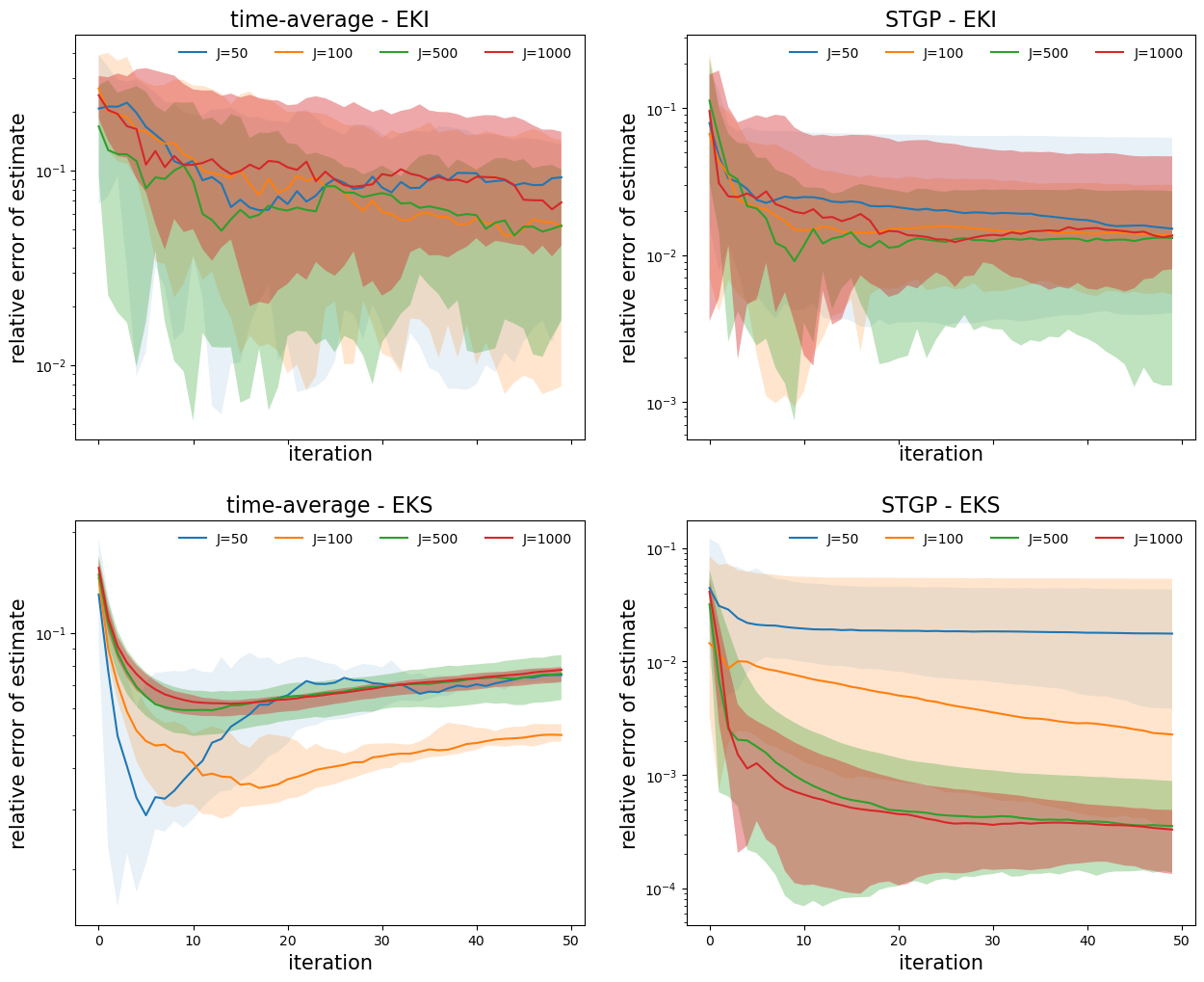}
\caption{R\"ossler inverse problem: comparing posterior estimates of parameter $u$ for two models (time-average and STGP) in terms of relative error of median $\mathrm{REM}=\frac{\Vert \hat u - u^\dagger \Vert}{\Vert u^\dagger \Vert}$. Each experiment is repeated for 10 runs of EnK (EKI and EKS respectively) and shaded regions indicate $5\sim 95\%$ quantiles of such repeated results.}
\label{fig:rsl_rem}
\end{figure}

\begin{figure}[t]
\includegraphics[width=.49\textwidth,height=.4\textwidth]{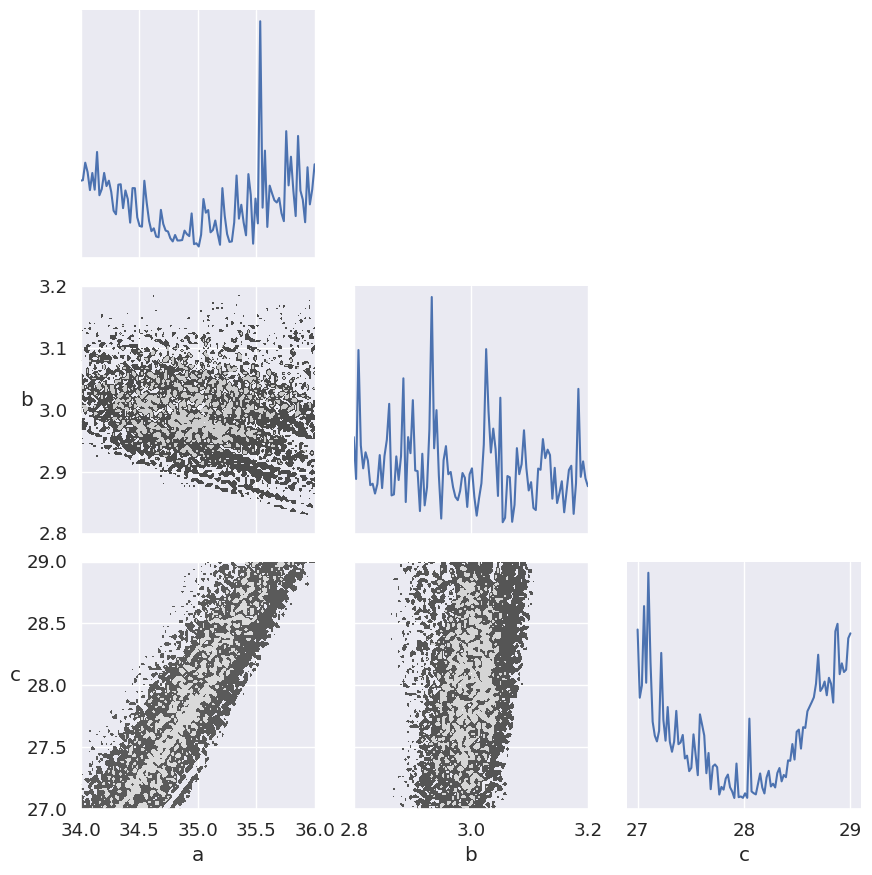}
\includegraphics[width=.49\textwidth,height=.4\textwidth]{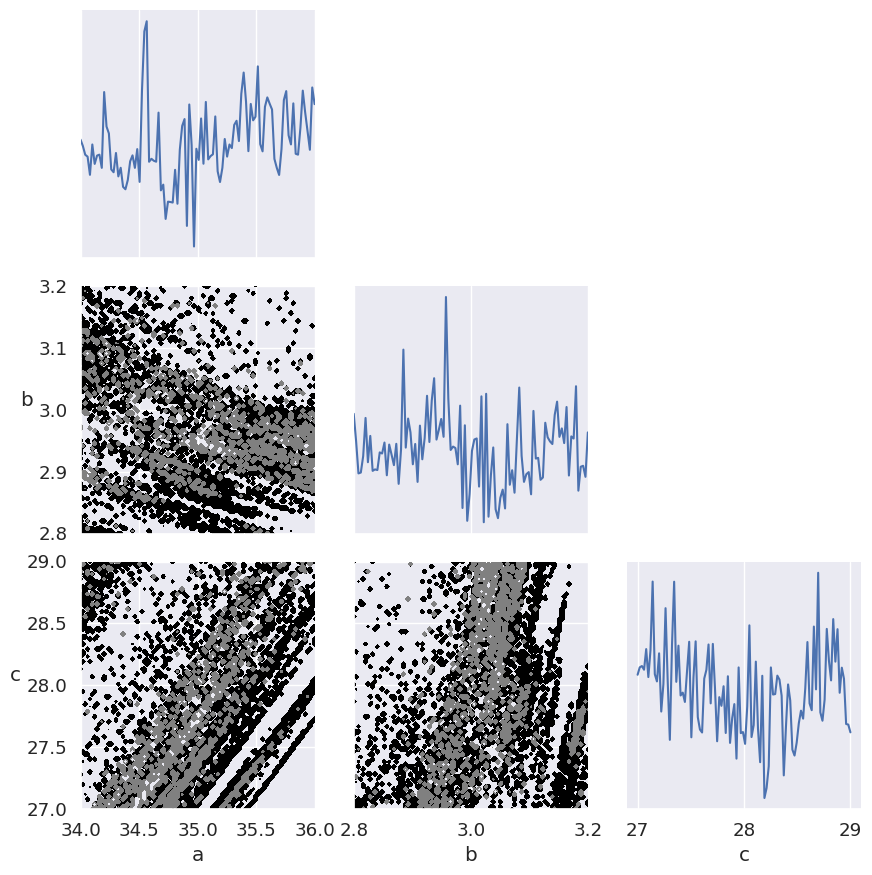}
\caption{Chen inverse problem: marginal (diagonal) and pairwise (lower triangle) sections of the joint density $p(u)$ by the time-averaged model (left) and the STGP model (right) respectively.}
\label{fig:chn_pairpdf}
\end{figure}

\begin{figure}[t]
\includegraphics[width=1\textwidth,height=.5\textwidth]{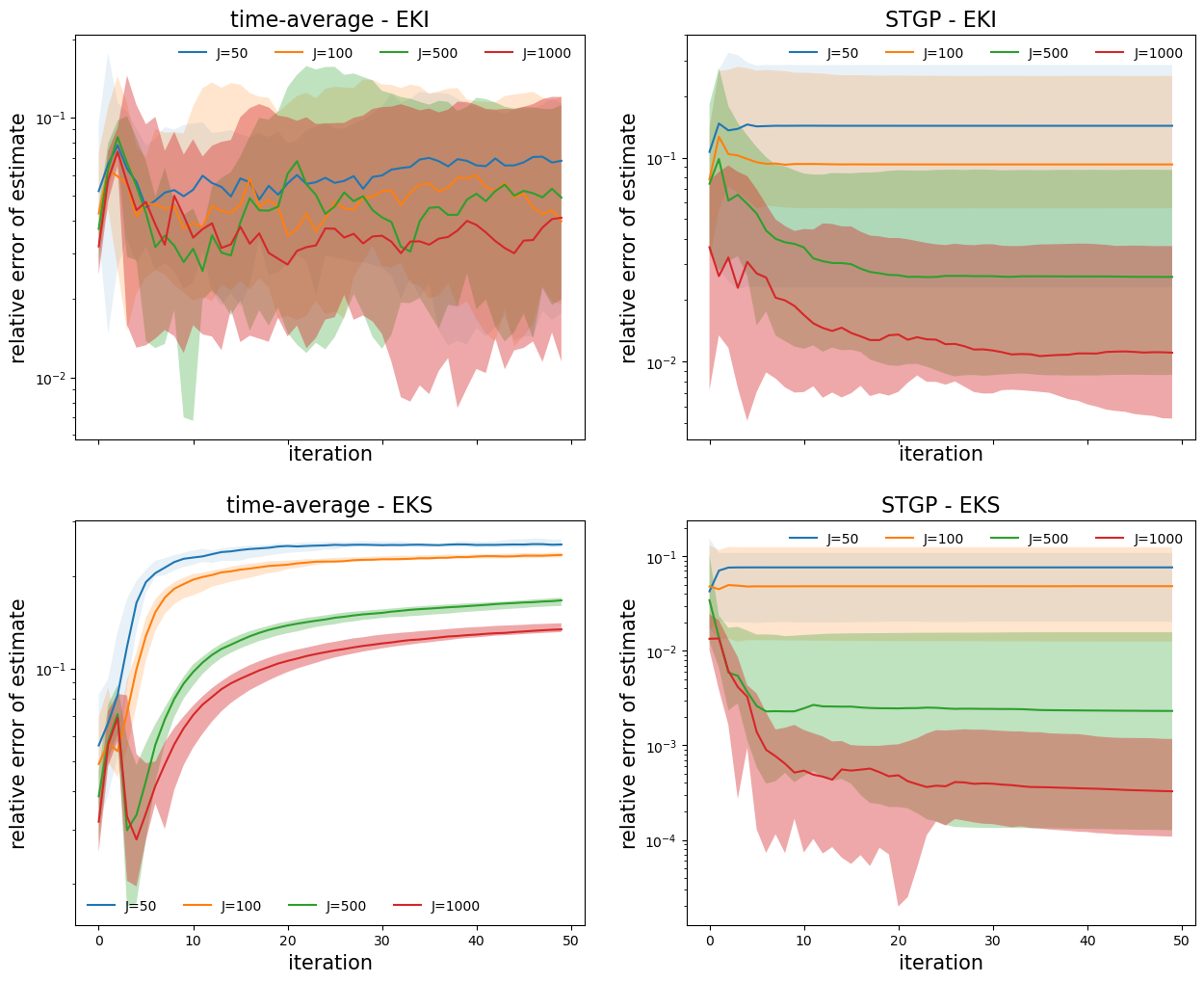}
\caption{Chen inverse problem: comparing posterior estimates of parameter $u$ for two models (time-average and STGP) in terms of relative error of median $\mathrm{REM}=\frac{\Vert \hat u - u^\dagger \Vert}{\Vert u^\dagger \Vert}$. Each experiment is repeated for 10 runs of EnK (EKI and EKS respectively) and shaded regions indicate $5\sim 95\%$ quantiles of such repeated results.}
\label{fig:chn_rem}
\end{figure}

\clearpage
\bibliographystyle{plain}
 \bibliography{references}
 
\end{document}